\def\calL{\mathcal{L}}
\def\calA{\mathcal{A}}
\def\calB{\mathcal{B}}
\begin{document}

\title{Algorithms for Covering Multiple Barriers\thanks{A preliminary
version of this paper appeared in the Proceedings of the 15th Algorithms and Data Structures Symposium (WADS 2017).}
}
\author{
Shimin Li
\and
Haitao Wang
}
\institute{
Department of Computer Science\\
Utah State University, Logan, UT 84322, USA\\
\email{shiminli@aggiemail.usu.edu,haitao.wang@usu.edu}\\
}

\maketitle

\begin{abstract}
In this paper, we consider the problems for covering multiple intervals on a line.
Given a set $B$ of $m$ line segments (called ``barriers'') on a horizontal line $L$
and another set $S$ of $n$ horizontal line segments of the same length in the plane, we want to move all segments of $S$ to $L$ so that their union covers all barriers and the maximum
movement of all segments of $S$ is minimized. Previously, an $O(n^3\log n)$-time
algorithm was given for the case $m=1$. In this paper, we propose an $O(n^2\log n\log \log n+nm\log m)$-time algorithm for a more general setting with any $m\geq 1$, which also improves the previous work when $m=1$. We then consider a line-constrained version of the problem in which the segments of $S$ are all initially on the line $L$. Previously, an $O(n\log n)$-time algorithm was known for the case $m=1$. We present an algorithm of $O(m\log m+n\log m \log n)$ time for any $m\geq 1$.
These problems may have applications in mobile sensor barrier coverage in wireless sensor networks.
\end{abstract}

\keywords barrier coverage, geometric coverage, mobile sensors, algorithms, data structures, computational geometry

\section{Introduction}\label{intr}

In this paper, we study algorithms for the problems for covering multiple
barriers. These are basic geometric problems and may have applications in
barrier coverage of mobile sensors in wireless
sensor networks. For convenience, in the following we introduce and discuss the
problems from the mobile sensor barrier coverage point of view.

Let $L$ be a line, say, the $x$-axis. Let $\calB$ be a set of $m$
pairwise disjoint segments, called {\em barriers}, sorted on $L$ from left to right. Let $S$ be a set of $n$ sensors in the plane, and each sensor $s_i\in S$ is
represented by a point with coordinate $(x_i,y_i)$. If a sensor is moved on $L$, it has a
{\em sensing/covering range} of length $r$, i.e., if a sensor $s$ is
located at $x$ on $L$, then all points of $L$ in the interval
$[x-r,x+r]$ are {\em covered} by $s$ and the interval is called
the {\em covering interval} of $s$. The problem is to move all sensors of $S$ onto $L$ such that each point of every barrier is covered by at least one
sensor and the maximum movement of all sensors of $S$ is minimized, i.e., the value $\max_{s_i\in S}\sqrt{(x_i-x_i')^2+y_i^2}$ is minimized, where $x_i'$ is the location of $s_i$ on $L$ in the solution (its $y$-coordinate is $0$ since $L$ is the $x$-axis). We call it the {\em multiple-barrier coverage} problem, denoted by MBC.

We assume that covering range of the sensors is long enough so that a coverage of all barriers is always possible. Note that we can check whether a coverage is possible in $O(m+n)$ time by an easy greedy algorithm (e.g., try to cover all barriers one by one from left to right using sensors in such a way that their covering intervals do not overlap except at their endpoints).

Previously, the case $m=1$ was studied and the
problem was solved in $O(n^3\log n)$ time~\cite{ref:LiMi15}. In this
paper, we propose an $O(n^2\log n\log \log n+nm\log m)$-time algorithm for any $m\geq 1$, which
also improves the algorithm in~\cite{ref:LiMi15} by almost a linear factor
when $m=1$.

We further consider a {\em line-constrained} version of the problem
where all sensors of $S$ are initially on $L$. Previously, the case
$m=1$ was studied and the problem was solved in $O(n\log n)$ time~\cite{ref:ChenAl13}. We present an $O(m\log m+n\log m \log n)$ time\footnote{The time was erroneously claimed to be $O((m+n)\log (m+n))$ in the conference version of the paper.} algorithm for any $m\geq 1$, and the running time matches
that of the algorithm in~\cite{ref:ChenAl13} when $m=1$.

\subsection{Related Work}
\label{related}
Sensors are basic units in wireless sensor networks. The advantage of allowing
the sensors to be mobile increases monitoring capability compared to
those static ones. One of the most important applications
in mobile wireless sensor networks is to monitor a barrier to detect intruders in an
attempt to cross a specific region. Barrier coverage~\cite{ref:KumarBa05,ref:LiMi15}, which guarantees that every movement crossing a barrier of sensors will be detected, is known to be an appropriate model of coverage for such applications.
Mobile sensors normally have limited battery power and therefore their
movements should be as small as possible.

Dobrev et al.~\cite{ref:DobrevCo15} studies several problems on covering multiple barriers in the plane. They showed that these problems are generally NP-hard when  sensors have different ranges. They also proposed polynomial-time algorithms for several special cases of the problems, e.g., barriers are parallel or perpendicular to each other, and sensors have some constrained movements. In fact, if sensors have different ranges, by an easy reduction from the Partition Problem as in \cite{ref:DobrevCo15}, we can show that our problem MBC is NP-hard
even for the line-constrained version and $m=2$.

Other previous work has been focused on  the line-constrained problem with $m=1$.
Czyzowicz et al.~\cite{ref:CzyzowiczOn09} first gave an $O(n^2)$ time algorithm,
and later, Chen et al.~\cite{ref:ChenAl13} solved the problem in $O(n \log n)$ time.
If sensors have different ranges, Chen et al.  \cite{ref:ChenAl13} presented an $O(n^2\log n)$ time algorithm.
For the {\em weighted case} where sensors have weights such that the moving cost of a sensor is its moving distance times its weight, Lee et al.~\cite{ref:LeeMi17} gave an $O(n^2\log n\log\log n)$ time algorithm for the case where sensors have the same range.

The {\em min-sum} version of the line-constrained problem with $m=1$
has also been studied, where the objective is to minimize the sum of the moving distances of all sensors. If sensors have different ranges, then
the problem is NP-hard \cite{ref:CzyzowiczOn10}. Otherwise,
Czyzowicz et al. \cite{ref:CzyzowiczOn10} gave an $O(n^2)$ time
algorithm, and Andrews and Wang~\cite{ref:AndrewsMi17} improved the algorithm to $O(n\log n)$~time. The {\em min-num} version of the problem was also studied, where the goal is to move the minimum number of sensors to form a barrier coverage. Mehrandish et al. \cite{ref:MehrandishOn11,ref:MehrandishMi11} proved that the problem is NP-hard if sensors have different ranges and gave polynomial time algorithms otherwise.

Bhattacharya et al.~\cite{ref:BhattacharyaOp09} studied a circular barrier coverage problem in which the barrier is a circle and the sensors are initially located inside the circle. The goal is to move sensors to the circle to form a regular $n$-gon (so as to cover the circle) such that the maximum sensor movement is minimized. An $O(n^{3.5}\log n)$-time algorithm was given in~\cite{ref:BhattacharyaOp09} and later Chen et al.~\cite{ref:ChenOp15} improved the algorithm to $O(n\log^3 n)$ time. The min-sum version of the problem was also studied~\cite{ref:BhattacharyaOp09,ref:ChenOp15}.

\subsection{Our Approach}\label{sec:approach}

To solve the problem MBC, one major difficulty is that we do not know the
order of the sensors of $S$ on $L$ in an optimal solution. Therefore,
our main effort is to find such an order. To this end, we first
develop a {\em decision algorithm} that can determine whether $\lambda\geq
\lambda^*$ for any value $\lambda$, where $\lambda^*$ is the
maximum sensor movement in an optimal solution. Our decision algorithm runs in   $O(m+n\log n)$ time. Then, we solve the
problem MBC by ``parameterizing'' the decision
algorithm in a way similar in spirit to parametric
search~\cite{ref:MegiddoAp83}. The high-level scheme of our algorithm is very similar
to those in~\cite{ref:ChenAl13,ref:LeeMi17}, but many low-level computations are different.

The line-constrained version of the problem is
much easier due to an {\em order preserving property}:
there exists an optimal solution in
which the order of the sensors is the same as in the input. This
leads to a linear-time decision algorithm using the greedy strategy. Also based on this property, we can find a set $\Lambda$ of $O(n^2m)$
``candidate values'' such that $\Lambda$ contains $\lambda^*$. To avoid
computing $\Lambda$ explicitly, we implicitly organize the elements of $\Lambda$ into
$O(n)$ sorted arrays such that each array element can be found in $O(\log m)$ time.
Finally, by applying the matrix
search technique in \cite{ref:FredericksonGe84}, along with our linear-time decision algorithm, we compute $\lambda^*$ in $O(m\log m+n\log m \log n)$ time.
We should point out that implicitly organizing the elements of $\Lambda$ into sorted arrays
is the key and also the major difficulty for solving the problem, and our technique may be interesting in its own right.

The rest of the paper is organized as follows. We introduce some notation
in Section \ref{sec:pre}. In Section~\ref{sec:line}, we present our
algorithm for the line-constrained problem. In Section \ref{sec:decision}, we present our decision algorithm for the problem MBC. Section \ref{sec:optimization} solves the problem MBC.  We conclude the paper in Section~\ref{sec:conclusion}, with remarks that our techniques can be used to reduce the space complexities of some previous algorithms
in~\cite{ref:LeeMi17}.

\section{Preliminaries}
\label{sec:pre}

We denote the barriers of $\calB$ by $B_1,B_2,\ldots,B_m$ sorted on $L$ from left to right. For each $B_i$, let $a_i$ and $b_i$ denote the left and right endpoints of $B_i$, respectively. For ease of exposition, we make a general position assumption that $a_i\neq b_i$ for each $B_i$. The degenerated case can also be handled by our techniques, but the discussions would be more tedious.

With a little abuse of notation, for any point $x$ on $L$ (the $x$-axis), we also use $x$ to denote its $x$-coordinate, and vice versa.
We assume that the left endpoint of $B_1$ is at $0$, i.e., $a_1=0$. Let $\beta$ denote the right endpoint of $B_m$, i.e., $\beta=b_m$.

We denote the sensors of $S$ by $s_1,s_2,\ldots,s_n$ sorted by their $x$-coordinates.
For each sensor $s_i$ located on a point $x$ of $L$, $x-r$ and $x+r$ are the left and right endpoints of the covering interval of $s_i$, respectively, and we call them the {\em left and right extensions} of $s_i$, respectively.

Again, let $\lambda^*$ be the
maximum sensor movement in an optimal solution.
Given $\lambda$, the {\em decision problem} is to determine whether $\lambda\geq \lambda^*$, or equivalently, whether we can move each sensor with distance at most $\lambda$ such that all barriers can be covered. If yes, we say that $\lambda$ is a {\em feasible value}. Thus, we also call it a {\em feasibility test} on $\lambda$.

In the following, we assume that $\lambda^*>0$, i.e., we have to move at least one sensor to produce a solution. This can be easily verified in $O((m+n)\log(m+n))$ time by checking whether all barriers can be covered by all sensors in the input.

\section{The Line-Constrained Version of MBC}
\label{sec:line}

In this section, we present our algorithm for the line-constrained MBC. As in the special case $m=1$~\cite{ref:CzyzowiczOn09}, since the sensing ranges of all sensors are equal, a useful observation is that the {\em order preserving} property holds: There exists an optimal solution in which the order of the sensors is the same as in the input. Due to this property, we first give a linear-time greedy algorithm for feasibility tests.

\begin{lemma}
Given any $\lambda>0$, we can determine whether $\lambda$ is a feasible value in $O(n+m)$ time.
\end{lemma}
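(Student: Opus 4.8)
The plan is to exploit the \emph{order preserving} property stated at the beginning of this section: among all placements realizing maximum movement at most $\lambda$, there is a feasible one (if any is feasible) in which the left-to-right order of the sensors on $L$ equals their input order $s_1,\ldots,s_n$. Hence I may restrict attention to order-preserving placements and design a single left-to-right greedy sweep. Note that a sensor $s_i$ sitting at $x_i$ may be relocated to any $x_i'\in[x_i-\lambda,\,x_i+\lambda]$, so its covering interval $[x_i'-r,\,x_i'+r]$ may be any length-$2r$ interval whose center lies in that range; in particular its total reachable coverage is $[x_i-\lambda-r,\,x_i+\lambda+r]$. The algorithm maintains the leftmost barrier point $p$ not yet covered (initially $p=a_1=0$) together with the index $k$ of the barrier currently containing $p$, and scans the sensors once in their input order, committing each sensor's position as it is examined.

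When $s_i$ is examined, there are three cases. If the sensor's rightmost reachable coverage $x_i+\lambda+r$ lies strictly left of $p$, the sensor cannot help cover anything at or beyond $p$ and is simply discarded, and I move on to $s_{i+1}$ without changing $p$. If instead the sensor's leftmost reachable left extension $x_i-\lambda-r$ lies strictly right of $p$, then neither this sensor nor any later one (all of which start at least as far right) can cover the barrier point $p$, so an uncoverable gap exists and I report \emph{infeasible}. Otherwise $s_i$ can cover $p$ with no gap, and to extend the covered region as far right as possible I place it at $x_i'=\min(p+r,\,x_i+\lambda)$, so its covering interval reaches up to $q=x_i'+r$. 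I then advance $p$ past all newly covered barrier material: while $b_k\le q$ I increment $k$, and if $k>m$ I report \emph{feasible}; otherwise I reset $p\leftarrow\max(q,a_k)$, which correctly jumps $p$ across any barrier gap $(b_{k-1},a_k)$ that need not be covered. If the scan ends with $k>m$ the instance is feasible and infeasible otherwise. Each sensor is handled in $O(1)$ time and $k$ only increases, so the sweep runs in $O(n+m)$ time after an $O(n+m)$ initialization.

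Correctness of the greedy, granted the order preserving property, is the crux. I would prove by induction on $i$ that, among all order-preserving placements of the first $i$ sensors that cover the barrier points of $[a_1,p]$ contiguously modulo gaps, the greedy maximizes the rightmost covered point. The base case is immediate, and the inductive step is a standard exchange: since by hypothesis the greedy's covered prefix reaches at least as far right as that of any competing placement, its leftmost uncovered point $p$ is no further left, and among all legal positions of $s_i$ that cover $p$ without a gap the choice $x_i'=\min(p+r,\,x_i+\lambda)$ is the rightmost, so the greedy's new covered extent dominates. Consequently the greedy succeeds exactly when some order-preserving placement does, which by the order preserving property is exactly when $\lambda$ is feasible. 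I expect the main obstacle to be making this exchange argument fully rigorous in the presence of the barrier gaps and of discarded sensors — in particular, formalizing ``covered contiguously modulo gaps'' and verifying that discarding a sensor whose reach lies entirely left of $p$ never sacrifices feasibility, since by order preservation such a sensor cannot be repositioned to contribute to covering any point at or to the right of $p$.
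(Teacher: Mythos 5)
Your proposal is correct and takes essentially the same approach as the paper: the paper runs the identical left-to-right greedy (its four cases with $x_i^r=x_i+\lambda$, namely $x_i^r+r\le p$, $x_i^r-r\le p<x_i^r+r$, $x_i^r-2\lambda-r\le p<x_i^r-r$, and $p<x_i^r-2\lambda-r$, coincide with your discard, place-at-$\min(p+r,\,x_i+\lambda)$, and infeasible cases), including the same jump of $p$ over barrier gaps, with your single placement rule merely unifying the paper's two covering cases. Your exchange-argument sketch supplies correctness details the paper leaves implicit, but the algorithm and its $O(n+m)$ analysis are the same.
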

\begin{proof}
We first move every sensor rightwards for distance $\lambda$. Then, every sensor is allowed to move leftwards at most $2\lambda$ but is not allowed to move rightwards any more.
Next we use a greedy strategy to move sensors leftwards as little as possible to cover the currently uncovered leftmost barrier. To this end, we maintain a point $p$ on a barrier that we need to cover such that all barrier points to the left of $p$ are covered but the barrier points to the right of $p$ are not.
We consider the sensors $s_i$ and the barriers $B_j$ from left to right.

Initially, $i=j=1$ and $p=a_1$. In general, suppose $p$ is located at a barrier $B_j$ and we are currently considering $s_i$. If $p$ is at $\beta$, then we are done and $\lambda$ is feasible.  If $p$ is located at $b_j$ and $j\neq m$, then we move $p$ rightwards to $a_{j+1}$ and proceed with $j=j+1$. In the following, we assume that $p$ is not at $b_j$. Let $x_i^r=x_i+\lambda$, i.e., the location of $s_i$ after it is initially moved rightwards by $\lambda$.

\begin{enumerate}
\item
If $x_i^r+r\leq p$, then we proceed with $i=i+1$.
\item
If $x_i^r-r\leq p<x_i^r+r$, we move $p$ rightwards to $x_i^r+r$.
\item
If $x_i^r-2\lambda-r\leq p<x_i^r-r$, then we move $s_i$ leftwards such that the left extension of $s_i$ is at $p$, and we then move $p$ to the right extension of $s_i$.
\item
If $p<x_i^r-2\lambda-r$, then we stop the algorithm and report that $\lambda$ is not feasible.
\end{enumerate}

Suppose the above moved $p$ rightwards (i.e., in the second and third cases). Then, if $p\geq \beta$, we report that $\lambda$ is feasible. Otherwise, if $p$ is not on a barrier, then we move $p$ rightwards to the left endpoint of the next barrier. In either case, $p$ is now located at a barrier, denoted by $B_j$, and we increase $i$ by one. We proceed as above with $B_j$ and $s_i$.
It is easy to see that the algorithm runs in $O(n+m)$ time. \qed
\end{proof}

Let $OPT$ be an optimal solution that preserves the order of the sensors. For each $i\in [1,n]$, let $x'_i$ be the position of $s_i$ in $OPT$.
We say that a set of $k$ sensors are in {\em attached positions} if the union of their covering intervals is a single interval of length equal to $2rk$.
The following lemma is an easy extension of a similar observation for the case $m=1$ in~\cite{ref:CzyzowiczOn09}.

\begin{lemma}\label{lem:candidate}
There exists a sequence of sensors $s_i,s_{i+1},\ldots,s_j$ in attached positions in $OPT$ such that one of the following three cases holds. (a) The sensor $s_j$ is moved to the left by distance $\lambda^*$ and $x'_i=a_k+r$ for some barrier $B_k$ (i.e., the sensors from $s_i$ to $s_j$ together cover the interval $[a_k,a_k+2r(j-i+1)]$). (b) The sensor $s_i$ is moved to the right by $\lambda^*$ and $x'_j=b_k-r$  for some barrier $B_k$. (c) The sensor $s_i$ is moved rightwards by $\lambda^*$ and $s_j$ is moved leftwards by $\lambda^*$.
\end{lemma}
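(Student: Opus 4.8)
The plan is to argue by contradiction through a local perturbation of $OPT$. First I would set $M$ to be the set of sensors that move by exactly $\lambda^*$; this is nonempty because $\lambda^*>0$. I would then group the sensors into maximal \emph{attached blocks}: maximal runs $s_i,\ldots,s_j$ of consecutive sensors whose covering intervals are in attached positions. Writing $[u,v]$ for the interval covered by such a block, so that $u=x'_i-r$, $v=x'_j+r$ and $v-u=2r(j-i+1)$, I classify each end as \emph{barrier-pinned} or \emph{free}: the left end is barrier-pinned when $u=a_k$ for some $B_k$ (equivalently $x'_i=a_k+r$), which blocks any rightward shift of the block, and is free when $u$ lies strictly inside a barrier-free gap, leaving positive slack to shift the block rightward; symmetrically the right end is barrier-pinned when $v=b_{k'}$ (blocking leftward shifts) and free otherwise. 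The whole point of this bookkeeping is that a free end gives room to move while a barrier-pinned end does not.

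The elementary fact that drives everything is that shifting a block rigidly to the right by $\delta>0$ increases the movement of every rightward-moving sensor in it by $\delta$ and decreases that of every leftward-moving sensor by $\delta$ (and oppositely for a left shift), while rigid shifts preserve all internal coverage. Now suppose, for contradiction, that no attached subsequence of $OPT$ satisfies (a), (b), or (c). I claim that then every block whose internal maximum movement equals $\lambda^*$ can be perturbed so that its internal maximum drops strictly below $\lambda^*$ while no other sensor reaches $\lambda^*$; since distinct blocks are separated and the perturbations are small, they apply simultaneously and keep coverage feasible, producing a solution of maximum movement below $\lambda^*$ and contradicting optimality. To prove the claim I split on the directions of the $\lambda^*$-movers in such a block $s_i,\ldots,s_j$. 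If all its $\lambda^*$-movers move rightward, a small left shift lowers the maximum unless the right end is barrier-pinned; so irreducibility forces $v=b_{k'}$, and trimming the block to begin at a rightward $\lambda^*$-mover $s_p$ yields $s_p,\ldots,s_j$ with $s_p$ moved right by $\lambda^*$ and $x'_j=b_{k'}-r$, which is case (b). The mirror situation gives case (a) after trimming to end at a leftward $\lambda^*$-mover and using $x'_i=a_k+r$. If the block contains a rightward $\lambda^*$-mover $s_p$ and a leftward $\lambda^*$-mover $s_q$ with $p\le q$, then $s_p,\ldots,s_q$ is attached with $s_p$ moved right and $s_q$ moved left by $\lambda^*$, i.e.\ case (c).

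The step I expect to be the real obstacle is the remaining \emph{stretched} configuration, in which all leftward $\lambda^*$-movers lie to the left of all rightward $\lambda^*$-movers, so that no single rigid shift helps and none of the trimmings above applies. Here I would argue that such a block must be barrier-pinned on at least one end. Otherwise, with slack at both ends, I can \emph{compress} it: let $q^*$ be the rightmost leftward $\lambda^*$-mover and $p^*$ the leftmost rightward $\lambda^*$-mover (so $q^*<p^*$), move the left group $s_i,\ldots,s_{q^*}$ rightward by $\delta$ and the right group $s_{p^*},\ldots,s_j$ leftward by $\delta$, and keep the middle sensors fixed; these middle sensors are not $\lambda^*$-movers, so this is safe. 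Under this move consecutive covering intervals only overlap more, so no gap over a barrier is ever created, and the slack at both ends absorbs the shrinkage at $u$ and $v$, so feasibility is preserved while every $\lambda^*$-mover's movement strictly decreases, contradicting irreducibility. Hence a stretched irreducible block is barrier-pinned on an end, and trimming to that end gives case (a) or case (b).

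Collecting the cases shows that some attached subsequence must satisfy (a), (b), or (c), which is the desired contradiction and proves the lemma. The only additional care I would take is with sensors whose covering intervals overlap in $OPT$ and hence land in different blocks (for instance a singleton block consisting of one $\lambda^*$-mover): I would note that an overlap supplies exactly the same local slack as a barrier-free gap, so the reducibility arguments, including the trimming to a single sensor, go through verbatim.
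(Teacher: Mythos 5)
Your proposal cannot be checked line-by-line against the paper, because the paper gives no proof of this lemma at all: it is stated as ``an easy extension of a similar observation for the case $m=1$'' in the cited work of Czyzowicz et al. What you have written is the standard perturbation/exchange argument that this citation alludes to, and your overall structure is sound: partition into maximal attached blocks, rigid shifts to kill one-sided $\lambda^*$-movers, the case-(c) extraction when a rightward $\lambda^*$-mover precedes a leftward one, and---the step that genuinely needs invention in the multi-barrier setting---the compression move for the stretched configuration, which you handle correctly (the junctions with the fixed middle only gain overlap, so only the two block ends need slack). Your trimmed subsequences do match the lemma's cases (a), (b), (c) exactly, and the simultaneous application of small perturbations across blocks is fine since there are finitely many strictly positive slacks.

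The one place you should tighten is the pinned/free dichotomy, which as stated is not exhaustive: a block end $u$ can lie strictly inside a barrier, or at a barrier endpoint of the ``opposite type'' (left end at some $b_k$, right end at some $a_k$), and such an end is neither pinned in your sense ($u=a_k$) nor free in your sense ($u$ strictly inside a gap). Your closing remark that ``an overlap supplies exactly the same local slack'' names the right mechanism but does not prove that the overlap \emph{exists} in these configurations. The missing two lines: if barrier points lie immediately to the left of $u$, feasibility of $OPT$ forces some sensor outside the block to have right extension at least $u$; if that extension exceeds $u$ you have slack, and if it equals $u$ then, using that $OPT$ is order-preserving, the immediate predecessor $s_{i-1}$ satisfies $x'_{i-1}\in[u-r,u+r]$, where $x'_{i-1}=u-r$ would make $s_{i-1}$ attached to the block (contradicting maximality) and $x'_{i-1}>u-r$ gives right extension strictly greater than $u$, i.e., slack after all. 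Note that this is where order preservation of $OPT$ actually enters: with blocks defined by \emph{consecutive indices}, a non-adjacent sensor attached in position to the block end is the configuration that would otherwise defeat your overlap remark. Symmetrically for the right end. With that patch, every end not at the correct barrier endpoint admits a strictly positive shift, your case analysis becomes exhaustive, and the contradiction with optimality of $\lambda^*$ goes through.
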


Cases (a) and (b) are symmetric in the above lemma.
Let $\Lambda_1$ be the set of all possible distance values introduced by $s_j$ in Case (a). Specifically, for any pair $(i,j)$ with $1\leq i\leq j\leq n$ and any barrier $B_k$ with $1\leq k\leq m$, define $\lambda(i,j,k)=x_j-(a_k+2r(j-i)+r)$. Let $\Lambda_1$ consist of $\lambda(i,j,k)$ for all such triples $(i,j,k)$. We define $\Lambda_2$ symmetrically to be the set of all possible values introduced by $s_i$ in Case (b). We define $\Lambda_3$ as the set consisting of the values $[x_j-x_i-2r(j-i)]/2$ for all pairs $(i,j)$ with $1\leq i< j\leq n$.
Clearly, $|\Lambda_3|=O(n^2)$ and both $|\Lambda_1|$ and $|\Lambda_2|$ are $O(mn^2)$.
Let $\Lambda=\Lambda_1\cup \Lambda_2\cup \Lambda_3$.

By Lemma~\ref{lem:candidate}, $\lambda^*$ is in $\Lambda$, and more specifically, $\lambda^*$ is the smallest feasible value of $\Lambda$. Hence, we can first compute $\Lambda$ and then find the smallest feasible value in $\Lambda$ by using the decision algorithm. However, that would take $\Omega(mn^2)$ time. To reduce the time, we will not compute $\Lambda$ explicitly, but implicitly organize the elements of $\Lambda$ into certain sorted arrays and then apply the matrix search technique proposed in~\cite{ref:FredericksonGe84}, which has been widely used, e.g.,~\cite{ref:FredericksonOp91,ref:FredericksonPa91}. Since we only need to deal with sorted arrays instead of more general matrices, we review the technique with respect to arrays in the following lemma.

\begin{lemma}\label{lem:msearch}{\em \cite{ref:FredericksonOp91,ref:FredericksonPa91}}
Given a set of $N$ sorted arrays of size at most $M$ each, we can compute the smallest feasible value of these arrays with $O(\log N+\log M)$ feasibility tests and the total time of the algorithm excluding the feasibility tests is $O(\tau\cdot N\cdot \log M)$, where $\tau$ is the time for evaluating each array element (i.e., given the name of an array $A$ and an index $i$ of $A$, we can compute the value of the element $A[i]$ in $O(\tau)$ time).
\end{lemma}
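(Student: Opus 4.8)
The plan is to exploit the fact that the feasibility predicate is monotone: a value is feasible if and only if it is at least $\lambda^*$. Consequently, within each sorted array the feasible elements form a suffix, and the smallest feasible value of a single array $A$ is the element at its \emph{split index} $p_A$, the least index with $A[p_A]\ge\lambda^*$; the answer we seek is then $\min_A A[p_A]$ over all $N$ arrays. So the whole task reduces to locating the split index in every array simultaneously while spending as few feasibility tests as possible. I would do this by a weighted-median halving scheme in the spirit of Frederickson and Johnson.

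I maintain for each array $A$ an active index interval $I_A=[l_A,r_A]$ known to contain $p_A$, initialized to the full range $[1,M]$; the invariant is that $A[l_A-1]$ is infeasible and $A[r_A]$ is feasible (with the obvious conventions at the two ends, so that an array with no feasible element is detected). In one round I evaluate, for every array whose interval is not yet a singleton, the element at the median index of $I_A$; this costs $O(\tau)$ per active array. I then compute in linear time the weighted median $v$ of these median values, each weighted by $|I_A|$, and perform a single feasibility test on $v$. If $v$ is feasible, then for every active array whose median value is at least $v$ I discard the upper half of $I_A$ (those elements are all feasible, so $p_A$ lies in the lower half); if $v$ is infeasible I symmetrically discard the lower halves of the arrays whose median value is at most $v$. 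By the defining property of the weighted median, the discarded arrays carry at least half of the total uncertainty $W=\sum_A|I_A|$, and we remove half of each, so each round kills at least a quarter of $W$. Hence $W$ shrinks geometrically from its initial value $NM$, and after $O(\log N+\log M)$ rounds every interval is a singleton; since each round uses exactly one feasibility test, this accounts for the $O(\log N+\log M)$ tests, after which the answer is read off as $\min_A A[p_A]$ in $O(N)$ additional time.

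The step I expect to be the real obstacle is shaving the element-evaluation cost from the naive $O(\tau N(\log N+\log M))$ down to the claimed $O(\tau N\log M)$. The point is that the $O(N)$ per-round work is only incurred while many arrays are still active, and this phase is short. Writing $W_t\le(3/4)^t NM$ for the uncertainty after $t$ rounds, we have $W_t\ge N$ only for the first $t_0=O(\log M)$ rounds, each contributing $O(\tau N)$ work, for a total of $O(\tau N\log M)$. Once $W_t<N$, the number of still-active arrays is at most $W_t$ and continues to decay geometrically, so the work over all remaining rounds telescopes to $O(\tau W_{t_0})=O(\tau N)$. Adding the two phases gives $O(\tau N\log M)$ non-test time, as required. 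The remaining care is purely bookkeeping: maintaining the split-index invariant correctly at the interval endpoints (including arrays that turn out to have no feasible element), and verifying that the weighted median really eliminates a constant fraction of the weight in both the feasible and infeasible branches.
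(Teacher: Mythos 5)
The paper gives no proof of this lemma at all---it is imported verbatim from Frederickson's work \cite{ref:FredericksonOp91,ref:FredericksonPa91}---and your weighted-median halving scheme is precisely the standard argument behind that cited result, including the crucial two-phase charging (per-round cost $O(\tau N)$ only during the first $O(\log M)$ rounds while the total interval weight exceeds $N$, then geometrically decaying cost proportional to the active weight) that yields the $O(\tau\cdot N\cdot \log M)$ non-test time, together with one feasibility test per round for $O(\log N+\log M)$ tests overall. Your reconstruction is correct, and the endpoint bookkeeping you flag is routine with the usual sentinels $A[0]=-\infty$ and $A[M+1]=+\infty$, the latter marking arrays with no feasible element.
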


With Lemma~\ref{lem:msearch}, we can compute the smallest feasible values in the three sets $\Lambda_1$, $\Lambda_2$, and $\Lambda_3$, respectively, and then return the smallest one as $\lambda^*$.
For $\Lambda_3$, Chen et al.~\cite{ref:ChenAl13} (see Lemma~14) gave an approach to order in $O(n\log n)$ time the elements of $\Lambda_3$ into $O(n)$ sorted arrays of $O(n)$ elements each such that each array element can be evaluated in $O(1)$ time. Consequently, by applying Lemma~\ref{lem:msearch}, the smallest feasible value of $\Lambda_3$ can be computed in $O((n+m)\log n)$ time.

For $\Lambda_1$ and $\Lambda_2$, in the case $m=1$, the elements of each set can be easily ordered into $O(n)$ sorted arrays of $O(n)$ elements each~\cite{ref:ChenAl13}. However, in our problem setting, the problem becomes significantly more difficult if we want to obtain a subquadratic-time algorithm. Indeed, this is the main challenge of our method. In what follows, our main effort is to prove the following lemma.

\begin{lemma}\label{lem:arrays}
For the set $\Lambda_1$, in $O(m\log m)$ time, we can implicitly form a set $\calA$ of $n$ sorted arrays of at most $(n+1)\cdot m^2$ elements each such that each array element can be evaluated in $O(\log m)$ time and every element of $\Lambda_1$ is contained in one of the arrays. The same applies to the set $\Lambda_2$.
\end{lemma}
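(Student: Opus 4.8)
The plan is to reduce the two‑dimensional structure of $\Lambda_1$ to $n$ ``difference sets'' of arithmetic progressions, one for each right index $j$, and then to exploit the fact that all these progressions share the common difference $2r$ in order to sort them implicitly.

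\emph{Step 1 (algebraic reduction and grouping).} I would first rewrite, for every triple $(i,j,k)$,
$$\lambda(i,j,k)=x_j-\bigl(a_k+2r(j-i)+r\bigr)=C_j+\bigl(2ri-a_k\bigr),\qquad C_j:=x_j-2rj-r.$$
Since $C_j$ depends only on $j$, it is a common additive shift, so for fixed $j$ the relative order of $\{\lambda(i,j,k):1\le i\le j,\ 1\le k\le m\}$ is exactly that of the values $v_{i,k}:=2ri-a_k$. I would therefore let the $n$ arrays be indexed by $j$, with $A_j$ holding (a superset of) these values. Grouping by $j$ rather than by $i$ is essential: with $j$ fixed the index $i$ enters only through the arithmetic progression $2r,4r,\dots,2rj$, whereas grouping by $i$ would leave the unstructured and non‑monotone quantities $x_j-2rj$.

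\emph{Step 2 (the key structural fact: residues modulo $2r$).} For fixed $j$ the set $\{v_{i,k}\}$ is the union over $k$ of $m$ arithmetic progressions of common difference $2r$. Writing $w_k:=\lceil a_k/2r\rceil$ and the phase $\phi_k:=(-a_k)\bmod 2r\in[0,2r)$ gives $v_{i,k}=2r\,(i-w_k)+\phi_k$, so the block $b_{i,k}:=i-w_k$ and the phase $\phi_k$ determine $v_{i,k}$ completely, and values in distinct blocks never interleave. Hence the global sorted order is lexicographic: ascending in the block $i-w_k$, then ascending in the phase $\phi_k$. The shared preprocessing is to compute all $w_k$ and $\phi_k$, to sort the barriers by phase, to build a range‑selection (wavelet) tree over the phases indexed in the input $a$‑order, and to form prefix sums of the sorted sequence $w_1\le\cdots\le w_m$; all of this costs $O(m\log m)$ and is reused by every $A_j$.

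\emph{Step 3 (evaluating an array element, and the size bound).} To return the $t$‑th smallest element of $A_j$ I would locate its block $b^*$ as the smallest $b$ with $N_j(b)\ge t$, where $N_j(b)=\#\{(i,k):i-w_k\le b\}=\sum_k\operatorname{clamp}(b+w_k,0,j)$; because $w$ is sorted, $N_j$ is piecewise linear with three regimes (contributing $0$, $b+w_k$, and $j$) separated by two binary‑search thresholds, so it is evaluable in $O(\log m)$ from the prefix sums. Inverting $N_j$ fixes $b^*$ and the residual rank $s$ inside the block; the barriers contributing to block $b^*$ are those with $w_k\in[1-b^*,\,j-b^*]$, a contiguous $k$‑range, and the $s$‑th smallest value among them is the one of $s$‑th smallest phase in that range, returned by the wavelet tree in $O(\log m)$. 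From the recovered $(i,k)$ the value $C_j+2ri-a_k$ is reported. For the size bound I would conceptually pad each occupied block to exactly $m$ entries (repeating an extreme value, preserving sortedness); since the occupied blocks form the union of $m$ integer runs of length $j\le n$, there are at most $nm\le(n+1)m$ of them, so the padded array has at most $(n+1)m^2$ entries. The set $\Lambda_2$ is symmetric.

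The main obstacle I anticipate is carrying out the block location of Step~3 within the claimed $O(\log m)$ bound (rather than the $O(\log^2 m)$ of a naive binary search that re‑evaluates $N_j$ at every step), while simultaneously keeping \emph{all} per‑$j$ work inside a single $O(\log m)$ query and confining the shared construction to $O(m\log m)$ total. The numeric spread of the block indices $i-w_k$ is unbounded, so one must binary search only over the $O(m)$ structurally relevant thresholds and cannot afford any per‑$j$ precomputation of run boundaries; reconciling the implicit padding to a near‑rectangular grid of size $(n+1)m^2$ with this constraint, so that the residual cost collapses to one wavelet query, and then feeding the resulting arrays cleanly into the matrix‑search routine of Lemma~\ref{lem:msearch}, is the delicate heart of the argument.
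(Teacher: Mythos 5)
Your Steps 1 and 2 recover the two structural facts the paper's proof is built on: for fixed $j$ the relative order of $\{\lambda(i,j,k)\}$ is independent of $j$ (the paper phrases this as the permutation $\sigma_j$ being the same for all $j$), and the values decompose into pairwise disjoint blocks of width $2r$ (the paper's levels, with $B_h[j,k]\in[\lambda_{\min}[j]+2r(h-1),\lambda_{\min}[j]+2rh)$), within which the order is a fixed phase permutation. Your within-block range selection is also sound, since $w_k=\lceil a_k/2r\rceil$ is nondecreasing in $k$ (the barriers are sorted), so the barriers contributing to a block form a contiguous index range. However, the step you yourself flag as unresolved --- locating the block of the $t$-th element in $O(\log m)$ time using only $j$-independent $O(m\log m)$ preprocessing --- is a genuine gap, and it is precisely the step to which the paper devotes its group-partition technique. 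The difficulty is that which blocks are occupied, and how the $m$ runs $[1-w_k,\,j-w_k]$ merge into maximal components, depends on $j$: two consecutive runs merge exactly when the corresponding $w$-gap is at most $j$. Binary-searching for $b^*$ by re-evaluating your clamp-sum $N_j$ at each probe costs $O(\log^2 m)$, not $O(\log m)$, while precomputing component boundaries separately for each $j$ costs $\Omega(m)$ per index, i.e., $\Omega(nm)$ overall, blowing the preprocessing budget. (Your padding, incidentally, does not serve the size bound --- the unpadded array has only $jm\leq nm$ elements, already well under $(n+1)m^2$; its only role is to make rank-to-block arithmetic trivial, and even then you must still find the $u$-th \emph{occupied} block, which is the same $j$-dependent problem.)

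The paper's resolution, absent from your sketch, is to compute the partition of $\{1,\ldots,m\}$ into groups once, at $j=n$, and to prove two facts that make this single partition valid and $O(1)$-queryable for every $j$: (i) the inter-group separation persists for all $j<n$, because the separating gaps only grow as $j$ decreases (Lemma~\ref{lem:nonoverlap}); and (ii) within each group $G_g$ the block count and prefix rank vary \emph{linearly} in $j$, namely $\alpha_g[j]=\alpha_g[n]-n+j$ and $\beta_g[j]=\beta_g[n]+\delta_g\cdot(j-n)$ (Lemma~\ref{lem:80}), so each probe of the binary search over the at most $m$ groups takes $O(1)$ time and the whole location step takes $O(\log m)$. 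The paper moreover pads each group to a full rectangle (every list in a group is extended to span the group's full range), which is where the $(n+1)\cdot m^2$ bound actually comes from and which makes the within-group step pure $O(1)$ arithmetic with the single precomputed permutation, dispensing with your wavelet tree altogether. Without an analogue of (i) and (ii) --- some $j$-independent component structure whose per-$j$ rank offsets are computable in $O(1)$ --- your construction either exceeds $O(\log m)$ per evaluation or exceeds the $O(m\log m)$ preprocessing bound, so the proposal as written does not yet prove the lemma.
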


We note that our technique for Lemma~\ref{lem:arrays} might be interesting in its own right and may find other applications as well. We also remark that the set of all array elements of $\calA$ in the lemma is a superset of $\Lambda_1$ that is much larger than $\Lambda_1$. Ideally, one may want to organize the elements of the same set $\Lambda_1$ (or a relatively smaller superset) into sorted arrays. However, we do not find a good way to do so. Fortunately, our approach of using a large superset still gives a good performance in both the preprocessing time and the query time (i.e., for evaluating an array element). In fact, as will be seen later, the reason we use a large superset is to make the query time small.

Before proving Lemma~\ref{lem:arrays}, we first prove the following result.

\begin{theorem}\label{theo:line}
The line-constrained version of MBC can be solved in $O(m\log m+n\log m \log n)$ time.
\end{theorem}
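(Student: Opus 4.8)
The plan is to reduce the computation of $\lambda^*$ to three smallest-feasible-value computations and feed each into the array-search machinery. By Lemma~\ref{lem:candidate} and the discussion following it, $\lambda^*$ is the smallest feasible value of $\Lambda=\Lambda_1\cup\Lambda_2\cup\Lambda_3$; since feasibility is monotone (every value at least $\lambda^*$ is feasible and every value below $\lambda^*$ is not), the smallest feasible value of the union equals $\min(f_1,f_2,f_3)$, where $f_i$ denotes the smallest feasible value of $\Lambda_i$. Thus it suffices to compute each $f_i$ separately and return their minimum. For $\Lambda_3$ I would invoke the ordering of Chen et al.~\cite{ref:ChenAl13} into $O(n)$ sorted arrays of $O(n)$ elements with $O(1)$ evaluation time and apply Lemma~\ref{lem:msearch}; as already noted, this computes $f_3$ in $O((n+m)\log n)$ time. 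Because $\Lambda_1$ and $\Lambda_2$ are symmetric, I would describe only the computation of $f_1$.

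For $\Lambda_1$, the plan is to pair Lemma~\ref{lem:arrays} with Lemma~\ref{lem:msearch}, using the linear-time feasibility test of the first lemma of this section as the decision oracle. First I would spend $O(m\log m)$ time to build implicitly the family $\calA$ of $N=n$ sorted arrays, each of size $M\le (n+1)m^2$, whose union contains $\Lambda_1$ and whose elements are evaluable in $\tau=O(\log m)$ time. Running the algorithm of Lemma~\ref{lem:msearch} on $\calA$ then computes $f_1$. Here $\log M=\log\big((n+1)m^2\big)=O(\log n+\log m)$ and $\log N=\log n$, so the search issues $O(\log n+\log m)$ feasibility tests, each costing $O(n+m)$ time, and spends an additional $O(\tau\cdot N\cdot\log M)=O\big(n\log m\,(\log n+\log m)\big)=O(n\log m\log n+n\log^2 m)$ time on element evaluations. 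Summing the three computations gives a total of $O(m\log m)+O((n+m)\log n)+O\big((\log n+\log m)(n+m)\big)+O(n\log m\log n+n\log^2 m)$.

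The only delicate point, and hence the main obstacle I would focus on, is verifying that this sum collapses to $O(m\log m+n\log m\log n)$; the extra $m^2$ factor in the array size and the $O(\log m)$ evaluation time must not escape the budget. The crux is the elementary inequality $n\log m=O(m+n\log n)$: if $m\le n^2$ then $\log m=O(\log n)$, while if $m>n^2$ then $n<\sqrt m$ and $\log m=O(\sqrt m)$, so $n\log m=O(\sqrt m\,\log m)=O(m)$. Multiplying this inequality by $\log m$ absorbs the troublesome term, giving $n\log^2 m=O(m\log m+n\log m\log n)$. The feasibility-test cost expands to $n\log n+n\log m+m\log n+m\log m$; here $n\log n$ and $n\log m$ are dominated by $n\log m\log n$ (as $\log m,\log n\ge 1$ for $m,n\ge 2$), while $m\log n$ is absorbed since $m\log n\le m\log m$ when $n\le m$ and $m\log n\le n\log n\le n\log m\log n$ when $n>m$; the same reasoning places $O((n+m)\log n)$ within the bound. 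Combining these estimates yields the claimed running time $O(m\log m+n\log m\log n)$, and the degenerate case $m=1$ is already covered by the $O((n+m)\log n)$ term, matching the $O(n\log n)$ bound of~\cite{ref:ChenAl13}.
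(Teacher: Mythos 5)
Your proposal is correct and takes essentially the same approach as the paper: it reduces $\lambda^*$ to the smallest feasible value over $\Lambda_1\cup\Lambda_2\cup\Lambda_3$ (working with the supersets given by the arrays of Lemma~\ref{lem:arrays}, justified by monotonicity of feasibility, exactly as the paper does via $\Lambda_1'$ and $\Lambda_2'$), applies Lemma~\ref{lem:msearch} with the $O(n+m)$-time decision algorithm, and collapses the bound with the same case analyses ($m\leq n$ versus $m>n$ for $m\log n$, and $m\leq n^2$ versus $m>n^2$ for $n\log^2 m$). No gaps to report.
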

\begin{proof}
It is sufficient to compute $\lambda^*$, after which we can apply the decision algorithm on $\lambda^*$ to obtain an optimal solution.

Let $\Lambda_1'$ denote the set of all elements in the arrays of $\calA$ specified in Lemma~\ref{lem:arrays}. Define $\Lambda_2'$ similarly with respect to $\Lambda_2$. By Lemma~\ref{lem:arrays}, $\Lambda_1\subseteq \Lambda_1'$ and $\Lambda_2\subseteq \Lambda_2'$. Since $\lambda^*\in \Lambda_1\cup \Lambda_2\cup \Lambda_3$, we also have  $\lambda^*\in \Lambda_1'\cup \Lambda_2'\cup \Lambda_3$. Hence, $\lambda^*$ is the smallest feasible value in $\Lambda_1'\cup \Lambda_2'\cup \Lambda_3$. Let $\lambda_1$, $\lambda_2$, and $\lambda_3$ be the smallest feasible values in the sets $\Lambda_1'$, $\Lambda_2'$, and $\Lambda_3$, respectively. As discussed before, $\lambda_3$ can be computed in $O((n+m)\log n)$ time. By Lemma~\ref{lem:arrays}, applying the algorithm in Lemma~\ref{lem:msearch} can compute both $\lambda_1$ and $\lambda_2$ in $O((n+m)(\log m+\log n)+n\log m\log nm)$ time.

We claim that $(n+m)(\log m+\log n)+n\log m\log nm=\Theta(m\log m + n\log m\log n)$. Note that $(n+m)(\log m+\log n)+n\log m\log nm=\Theta(m\log m + m\log n+ n\log m\log n+n\log^2 m)$. To prove the claim, it is sufficient to show that $m\log n+n\log^2 m=O(m\log m+n\log m\log n)$, as follows.
If $m\leq n$, then $m\log n=O(n\log m\log n)$; otherwise, $m\log n=O(m\log m)$. Therefore, we obtain $m\log n=O(m\log m+n\log m\log n)$. On the other hand, if $m\leq n^2$, then $\log m=O(\log n)$, and thus, $n\log^2 m=O(n\log m\log n)$; otherwise, it holds that $n\log^2 m=O(m)$ because both $n$ and $\log^2 m$ are bounded by $O(\sqrt{m})$. Hence, we obtain $n\log^2 m=O(m\log m+n\log m\log n)$. The claim thus follows.
\qed
\end{proof}

\subsection{Proving Lemma~\ref{lem:arrays}}

In this section, we prove Lemma~\ref{lem:arrays}. We will only prove the case for $\Lambda_1$, since the other case for $\Lambda_2$ is symmetric. Recall that $\Lambda_1=\{\lambda(i,j,k)\ |\ 1\leq i\leq j\leq n, 1\leq k\leq m\}$, where $\lambda(i,j,k)=x_j-(a_k+2r(j-i)+r)$.

For any $j$ and $k$, let $A[j,k]$ denote the list $\lambda(i,j,k)$ for $i=1,2,\ldots, j$, which is sorted in increasing order. With a little abuse of notation, let $A[j]$ denote the union of the elements in $A[j,k]$ for all $k\in [1,m]$.
Clearly, $\Lambda_1$ is the union of $A[j]$'s for all $1\leq j\leq n$.
In the following, we will organize the elements in each $A[j]$ into a sorted array $B[j]$ of size $O(nm^2)$ such that given any index $t$, the $t$-th element of $B[j]$ can be computed in $O(\log m)$ time, which will prove Lemma~\ref{lem:arrays}. Our technique replies on the following property: the difference of every two adjacent elements in each list $A[j,k]$ is the same, i.e., $2r$.

Notice that for any $k\in [1,m-1]$, the first element of $A[j,k]$ is larger than the first element of $A[j,k+1]$, and similarly, the last element of $A[j,k]$ is larger than the last element of $A[j,k+1]$.
Hence, the first element of $A[j,m]$, i.e., $\lambda(1,j,m)$, is the smallest element of $A[j]$ and the last element of $A[j,1]$, i.e., $\lambda(j,j,1)$, is the largest element of $A[j]$. Let $\lambda_{min}[j]=\lambda(1,j,m)$ and $\lambda_{max}[j]=\lambda(j,j,1)$.

For each $k\in [1,m]$, we extend the list $A[j,k]$ to a new sorted list $B[j,k]$ with the following property: (1) $A[j,k]$ is a sublist of $B[j,k]$; (2) the difference of every two adjacent elements of $B[j,k]$ is $2r$; (3) the first element of $B[j,k]$ is in $[\lambda_{min}[j],\lambda_{min}[j]+2r)$; (4) the last element of $B[j,k]$ is in $(\lambda_{max}[j]-2r,\lambda_{max}[j]+2r)$.
Specifically, $B[j,k]$ is defined as follows.
Note that $\lambda(1,j,k)$ is the first element of $A[j,k]$.
We let $\lambda(1,j,k)-\lfloor\frac{\lambda(1,j,k)-\lambda_{min}[j]}{2r}\rfloor \cdot 2r$
be the first element of $B[j,k]$. Then, the $h$-th element of $B[j,k]$ is equal to $\lambda(1,j,k)-\lfloor\frac{\lambda(1,j,k)-\lambda_{min}[j]}{2r}\rfloor \cdot 2r+ 2r\cdot (h-1)$ for any $h\in [1,\alpha[j]]$, where $\alpha[j]=1+\lfloor\frac{\lambda_{\max}[j]-\lambda_{min}[j]}{2r}\rfloor$. Hence, $B[j,k]$ has $\alpha[j]$ elements. One can verify that $B[j,k]$ has the above four properties. Note that we can implicitly create the list $B[j,k]$ in $O(1)$ time so that given any $k\in [1,m]$ and $h\in [1,\alpha[j]]$, we can obtain the $h$-th element of $B[j,k]$ in $O(1)$ time. Let $B[j]$ be the sorted list of all elements of $B[j,k]$ for all $1\leq k\leq m$. Hence, $B[j]$ has $\alpha[j]\cdot m$ elements.

Let $\sigma_j$ be the permutation of $1,2,\ldots,m$ following the sorted order of the first elements of $B[j,k]$ for all $k\in [1,m]$. For any $k\in [1,m]$, let $\sigma_j(k)$ be the $k$-th index in $\sigma_j$.
We have the following lemma.

\begin{lemma}
For any $t$ with $1\leq t \leq  \alpha[j]\cdot m$, the $t$-th smallest element of $B[j]$ is the $h_t$-th element of the list $B[j,\sigma_j(k_t)]$, where  $h_t=\lceil \frac{t}{m}\rceil$ and $k_t=t \mod m$.
\end{lemma}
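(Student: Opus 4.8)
The plan is to exploit the two structural facts built into the lists $B[j,k]$: every such list has the same common difference $2r$ between consecutive elements, and every such list has exactly $\alpha[j]$ elements. The key consequence I would establish first is that the elements of $B[j]$ fall into $\alpha[j]$ disjoint ``bands'', each of width $2r$, with the $h$-th band consisting of exactly the $h$-th elements of the $m$ lists $B[j,1],\ldots,B[j,m]$.

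First I would pin down where the first element of each $B[j,k]$ lies. Writing $d_k = \lambda(1,j,k) - \lambda_{\min}[j] \ge 0$, the defining formula gives that the first element of $B[j,k]$ equals $\lambda_{\min}[j] + (d_k \bmod 2r)$, which therefore lies in the half-open interval $[\lambda_{\min}[j], \lambda_{\min}[j] + 2r)$ for every $k$; this is property (3) of the construction. Since the $h$-th element of $B[j,k]$ is its first element plus $2r(h-1)$, it follows that the $h$-th element of every $B[j,k]$ lies in the band $I_h := [\lambda_{\min}[j] + 2r(h-1), \lambda_{\min}[j] + 2rh)$. These bands $I_1,\ldots,I_{\alpha[j]}$ are pairwise disjoint and increasing, so in the sorted list $B[j]$ all $h=1$ elements precede all $h=2$ elements, which precede all $h=3$ elements, and so on; that is, $B[j]$ decomposes into $\alpha[j]$ consecutive blocks of exactly $m$ elements each, the $h$-th block being the sorted set of $h$-th elements of $B[j,1],\ldots,B[j,m]$.

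Next I would argue that the relative order of the $m$ contributions is the same within every block, and is given by $\sigma_j$. The point is that for a fixed $h$ the map $k \mapsto (\text{$h$-th element of } B[j,k])$ is obtained from $k \mapsto (\text{first element of } B[j,k])$ by adding the common constant $2r(h-1)$ to all of them; adding a common constant is order-preserving, so the sorted order of the $h$-th elements across $k$ coincides with the sorted order of the first elements, which by definition is $\sigma_j$. Consequently, within the $h$-th block the $\ell$-th smallest element comes from the list $B[j,\sigma_j(\ell)]$ and equals its $h$-th element.

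Finally I would translate indices. An index $t$ with $1 \le t \le \alpha[j]\cdot m$ lands in block $h_t = \lceil t/m \rceil$, at position $k_t$ within that block, where $k_t = t \bmod m$ with the convention that a zero residue (i.e.\ $t$ a multiple of $m$) is read as $m$. Combining this with the previous paragraph, the $t$-th smallest element of $B[j]$ is the $k_t$-th smallest element of block $h_t$, which is exactly the $h_t$-th element of $B[j,\sigma_j(k_t)]$, as claimed. The one point that deserves care—and is really the crux—is the band-disjointness step: it relies on the first elements all lying in a half-open window of length exactly $2r$, which is precisely what the normalization in the construction of $B[j,k]$ guarantees; without it the bands could overlap and the clean block decomposition would fail. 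The modular indexing convention at multiples of $m$ is the only other bookkeeping subtlety.
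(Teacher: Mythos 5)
Your proof is correct and follows essentially the same route as the paper's own argument: the paper likewise observes that each $h$-th element $B_h[j,k]$ lies in the band $[\lambda_{\min}[j]+2r(h-1),\, \lambda_{\min}[j]+2rh)$, so that all elements with a smaller $h$ precede all elements with a larger $h$, and that within each band the order across $k$ is the fixed permutation $\sigma_j$ (since the $h$-th elements are the first elements shifted by the common constant $2r(h-1)$). Your explicit index translation, including the convention of reading the residue $k_t=0$ as $m$ when $m$ divides $t$, simply fills in bookkeeping the paper leaves to the reader with ``one can verify.''
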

\begin{proof}
Consider any $h$ with $1\leq h\leq \alpha[j]$. Denote by $B_h[j,k]$ the $h$-th element of $B[j,k]$ for each $k\in [1,m]$.  By our definition of $B[j,k]$, $B_h[j,k]\in [\lambda_{\min}[j]+2r(h-1), \lambda_{\min}[j]+2rh)$. Therefore, for any $h'<h$, it holds that $B_{h'}[j,k]<B_h[j,k']$ for any $k$ and $k'$ in $[1,m]$.
On the other hand, by the definition of $\sigma_j$, $B_h[j,\sigma(k)]<B_h[j,\sigma(k')]$ for any $1\leq k< k'\leq m$.

Based on the above discussion, one can verify that the lemma statement holds.
\qed
\end{proof}

By the preceding lemma, if the permutation $\sigma_j$ is known, we can obtain the $t$-th smallest element of $B[j]$ in $O(1)$ time for any index $t$. Computing $\sigma_j$ can be done in $O(m\log m)$ time by sorting. If we apply the sorting algorithm on every $j\in [1,n]$, then we would need $O(nm\log m)$ time. Fortunately, the following lemma implies that we only need to do the sorting once.

\begin{lemma}
The permutation $\sigma_j$ is unique for all $j\in [1,n]$.
\end{lemma}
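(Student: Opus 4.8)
The plan is to show that the sorted order of the first elements of the lists $B[j,1],\ldots,B[j,m]$ does not depend on $j$, which is precisely the assertion that $\sigma_j$ is the same permutation for every $j$. The crucial starting observation is that the first element of $A[j,k]$, namely $\lambda(1,j,k)=x_j-(a_k+2r(j-1)+r)$, separates into a part depending only on $j$ and a part depending only on $k$: writing $C_j:=x_j-2r(j-1)-r$, we have $\lambda(1,j,k)=C_j-a_k$, where $C_j$ is independent of $k$.

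First I would rewrite $\lambda_{min}[j]$ in these terms. Since the barriers are sorted from left to right, $a_m$ is the largest of $a_1,\ldots,a_m$, so $C_j-a_m$ is the smallest of the values $C_j-a_k$; this is consistent with the earlier identity $\lambda_{min}[j]=\lambda(1,j,m)=C_j-a_m$. Consequently $\lambda(1,j,k)-\lambda_{min}[j]=a_m-a_k\ge 0$ for every $k$, and this difference is already independent of $j$.

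Next I would compute the first element of $B[j,k]$ explicitly. By its definition it equals $\lambda(1,j,k)-\lfloor\frac{\lambda(1,j,k)-\lambda_{min}[j]}{2r}\rfloor\cdot 2r$; substituting $\lambda(1,j,k)-\lambda_{min}[j]=a_m-a_k$ yields
\[
\lambda(1,j,k)-\Big\lfloor\frac{a_m-a_k}{2r}\Big\rfloor\cdot 2r=\lambda_{min}[j]+\big((a_m-a_k)\bmod 2r\big).
\]
Thus the first element of $B[j,k]$ exceeds the common base value $\lambda_{min}[j]$ by exactly $(a_m-a_k)\bmod 2r$, an offset that depends only on $k$.

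Finally, since all $m$ first elements share the same base $\lambda_{min}[j]$ and differ from it only by the $j$-independent offsets $(a_m-a_k)\bmod 2r$, their relative order is governed entirely by these offsets; hence the permutation $\sigma_j$ listing the indices $k$ in increasing order of the first elements of $B[j,k]$ is the same for all $j\in[1,n]$, which is the claim. The only point needing care is the algebra converting the floor expression into a residue modulo $2r$, supported by the sign guarantee $a_m-a_k\ge 0$ that keeps the floor well behaved. There is no genuine obstacle here beyond recognizing the clean $j/k$ separation of $\lambda(1,j,k)$, which is the real content of the argument.
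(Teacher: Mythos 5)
Your proof is correct and takes essentially the same route as the paper: both hinge on the explicit formula for the first element of $B[j,k]$ together with the observation that $\lambda(1,j,k)-\lambda_{min}[j]=a_m-a_k$ is independent of $j$. Your packaging of the first element as $\lambda_{min}[j]+\bigl((a_m-a_k)\bmod 2r\bigr)$, a common base plus a $j$-independent offset, is just a slightly cleaner restatement of the paper's computation that the pairwise differences $B_1[j,k_1]-B_1[j,k_2]$ do not depend on $j$.
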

\begin{proof}
Consider any $j_1,j_2$ in $[1,n]$ with $j_1\neq j_2$ and any $k_1,k_2$ in $[1,m]$ with $k_1\neq k_2$. For any $j$ and $k$, let $B_1[j,k]$ denote the first element of $B[j,k]$.
To prove the lemma, it is sufficient to show that $B_1[j_1,k_1]<B_1[j_1,k_2]$ if and only if $B_1[j_2,k_1]<B_1[j_2,k_2]$.

Recall that $B_1[j,k]=\lambda(1,j,k)-\lfloor\frac{\lambda(1,j,k)-\lambda_{min}[j]}{2r}\rfloor \cdot 2r$ and $\lambda(1,j,k)=x_j-(a_k+2rj-r)$. Thus, $B_1[j,k]=x_j-a_k+r-\lfloor\frac{x_j-a_k+r-\lambda_{min}[j]}{2r}\rfloor \cdot 2r$. Further, since $\lambda_{min}[j]=\lambda(1,j,m)=x_j-(a_m+2rj-r)$, $B_1[j,k]=x_j-a_k+r-\lfloor\frac{a_m-a_k+2rj}{2r}\rfloor \cdot 2r=x_j-a_k+r-\lfloor\frac{a_m-a_k}{2r}\rfloor \cdot 2r-2rj$.

Therefore, $B_1[j_1,k_1]-B_1[j_1,k_2]=a_{k_2}-a_{k_1}+(\lfloor\frac{a_m-a_{k_2}}{2r}\rfloor - \lfloor\frac{a_m-a_{k_1}}{2r}\rfloor )\cdot 2r$ and $B_1[j_2,k_1]-B_1[j_2,k_2]=a_{k_2}-a_{k_1}+(\lfloor\frac{a_m-a_{k_2}}{2r}\rfloor - \lfloor\frac{a_m-a_{k_1}}{2r}\rfloor )\cdot 2r$.
Hence, $B_1[j_1,k_1]-B_1[j_1,k_2]=B_1[j_2,k_1]-B_1[j_2,k_2]$, which implies that  $B_1[j_1,k_1]<B_1[j_1,k_2]$ if and only if $B_1[j_2,k_1]<B_1[j_2,k_2]$.
\qed
\end{proof}

In summary, after $O(m\log m)$ time preprocessing to compute the permutation $\sigma_j$ for any $j$, we can form the arrays $B[j]$ for all $j\in [1,n]$ such that given any $j\in [1,n]$ and $t\in [1,\alpha[j]\cdot  m]$, we can compute the $t$-th smallest element of $B[j]$ in $O(1)$ time.
However, we are not done yet, because we do not have a reasonable upper bound for $\alpha[j]$, which is equal to $1+\lfloor\frac{\lambda_{\max}[j]-\lambda_{min}[j]}{2r}\rfloor
=1+\lfloor\frac{\lambda(j,j,1)-\lambda(1,j,m)}{2r}\rfloor
=j+\lfloor\frac{a_m-a_1}{2r}\rfloor$.
To address the issue, in the sequel, we will partition the indices $k\in [1,m]$ into groups and then apply our above approach to each group so that the $\alpha[j]$ values can be bounded, e.g., by $O(mn)$.

\paragraph{The Group Partition Technique.}
We consider any index $j\in [1,m]$.

We partition the indices $1,2,\ldots,m$ into groups each consisting of a sequence of consecutive indices, such that each group has the following {\em intra-group overlapping property}: For any index $k$ that is not the largest index in the group, the first element of $A[j,k]$ is smaller than or equal to the last element of $A[j,k+1]$ plus $2r$, i.e., $\lambda(1,j,k)\leq \lambda(j,j,k+1)+2r$. Further, the groups have the following {\em inter-group non-overlapping property}: For the largest index $k$ in a group that is not the last group, the first element of $A[j,k]$ is larger than the last element of $A[j,k+1]$ plus $2r$, i.e., $\lambda(1,j,k)> \lambda(j,j,k+1)+2r$.

We compute the groups in $O(m)$ time as follows.
Initially, add $1$ into the first group $G_1$. Let $k=1$. While the first element of $A[j,k]$ is smaller than or equal to the last element of $A[j,k+1]$ plus $2r$, we add $k+1$ into $G_1$ and reset $k=k+1$. After the while loop, $G_1$ is computed. Then, starting from $k+1$, we compute $G_2$ and so on until index $m$ is included in the last group. Let $G_1,G_2,\ldots,G_l$ be the $l$ groups we have computed. Note that $l\leq m$.

Consider any group $G_g$  with $1\leq g\leq l$. We process the lists $A[j,k]$ for all  $k\in G_g$ in the same way as discussed before. Specifically, for each $k\in G_g$, we create a new list $B[j,k]$ from $A[j,k]$. Based on the new lists in the group $G_g$, we form the sorted array $B_g[j]$ with a total of $|G_g|\cdot \alpha_g[j]$ elements, where $|G_g|$ is the number of indices of $G_g$ and $\alpha_g[j]$ is corresponding $\alpha[j]$ value as defined before but only on the group $G_g$, i.e., if $k_1$ and $k_2$ are the smallest and largest indices of $G_g$ respectively, then $\alpha_g[j]=1+\lfloor\frac{\lambda(j,j,k_1)-\lambda(1,j,k_2)}{2r}\rfloor$.
Let $B[j]$ be the sorted list of all elements in the lists $B_g[j]$ for all groups. Due to the intra-group overlapping property of each group, it holds that $\alpha_g[j]\leq |G_g|\cdot (n+1)$. Thus, the size of $B[j]$ is at most $\sum_{g=1}^l|G_g|^2 \cdot (n+1)$, which is at most $m^2 (n+1)$ since $\sum_{g=1}^l|G_g|=m$.

Suppose we want to find the $t$-th smallest element of $B[j]$. As preprocessing, we compute a sequence of values $\beta_g[j]$ for $g=1,2,\ldots,l$, where $\beta_g[j]=\sum_{g'=1}^g\alpha_{g'}[j]\cdot |G_{g'}|$, in $O(m)$ time. To compute the $t$-th smallest element of $B[j]$, we first do binary search on the sequence $\beta_1[j],\beta_2[j],\ldots,\beta_l[j]$ to find in $O(\log l)$ time the index $g$ such that $t\in (\beta_{g-1}[j],\beta_g[j]]$. Due to the inter-group non-overlapping property of the groups, the $t$-th smallest element of $B[j]$ is the $(t-\beta_{g-1}[j])$-th element in the array $B_g[j]$, which can be found in $O(1)$ time. As $l\leq m$, the total time for computing the $t$-th smallest element of $B[j]$ is $O(\log m)$.

The above discussion is on any single index $j\in [1,n]$.
With $O(m\log m)$ time preprocessing, given any $t$, we can find the $t$-th smallest value of $B[j]$ in $O(\log m)$ time.

For all indices $j\in [1,n]$, it appears that we have to do the group partition for every $j\in [1,n]$, which would take quadratic time. To resolve the problem, we show  that it is sufficient to only use the group partition based on $j=n$ for all other $j\in [1,n-1]$. The details are given below.

Suppose from now on $G_1,G_2,\ldots,G_l$ are the groups computed as above with respect to $j=n$. We know that the inter-group non-overlapping property holds with respect to the index $n$. The following lemma shows that the property also holds with respect to any other index $j\in [1,n-1]$.

\begin{lemma}\label{lem:nonoverlap}
The inter-group non-overlapping property holds for any $j\in [1,n-1]$.
\end{lemma}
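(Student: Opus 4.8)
We need to show that the inter-group non-overlapping property, which was established for $j=n$ by construction, actually holds for every $j \in [1,n-1]$. Concretely, if $k$ is the largest index in some group $G_g$ (that is not the last group), then for $j=n$ we have $\lambda(1,n,k) > \lambda(n,n,k+1)+2r$, and we must deduce $\lambda(1,j,k) > \lambda(j,j,k+1)+2r$ for all smaller $j$. Let me think about how the relevant quantity depends on $j$.
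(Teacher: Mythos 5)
Your proposal stops exactly where the proof needs to begin. You have correctly set up the problem---the groups are constructed with respect to $j=n$, so $\lambda(1,n,k) > \lambda(n,n,k+1)+2r$ holds by construction for the largest index $k$ of each non-last group, and the task is to transfer this inequality to every $j<n$---but no argument follows. The text ends with the announced intention to examine ``how the relevant quantity depends on $j$,'' which is indeed the right question, but it is never answered, so as it stands this is a restatement of the lemma, not a proof of it.

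The missing content is a short explicit computation. From $\lambda(i,j,k)=x_j-(a_k+2r(j-i)+r)$ one gets
\[
\lambda(1,j,k)-\lambda(j,j,k+1) \;=\; a_{k+1}-a_k+2r(1-j),
\]
where the $x_j$ terms cancel, so the difference depends on $j$ only through the term $2r(1-j)$, which is strictly decreasing in $j$. The hypothesis at $j=n$ says $a_{k+1}-a_k+2r(1-n)>2r$, and since $j<n$ implies $2r(1-j)>2r(1-n)$, it follows a fortiori that $a_{k+1}-a_k+2r(1-j)>2r$, i.e., $\lambda(1,j,k)>\lambda(j,j,k+1)+2r$, which is the inter-group non-overlapping property at index $j$. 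This is precisely the paper's argument, and the direction of the monotonicity is the whole point: the gap between the first element of $A[j,k]$ and the last element of $A[j,k+1]$ shrinks as $j$ grows, so the property propagates only downward from the largest index, which is exactly why the paper performs the single group partition at $j=n$ rather than at any other index. Without this computation (or an equivalent monotonicity argument), the claim that one partition suffices for all $j$ is unsupported.
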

\begin{proof}
Consider any $j\in [1,n-1]$ and any $k$ that is the largest index in a group $G_g$ with $g\in [1,l-1]$. The goal is to show that the first element of $A[j,k]$ is larger than the last element of $A[j,k+1]$ plus $2r$, i.e., $\lambda(1,j,k)> \lambda(j,j,k+1)+2r$.
Since the groups are defined with respect to the index $n$, it holds that $\lambda(1,n,k)> \lambda(n,n,k+1)+2r$.

Recall that $\lambda(i,j,k)=x_j-(a_k+2r(j-i)+r)$. Therefore, $\lambda(1,j,k)- \lambda(j,j,k+1)=a_{k+1}-a_k+2r(1-j)$ and $\lambda(1,n,k)-\lambda(n,n,k+1)=a_{k+1}-a_k+2r(1-n)$. Since $\lambda(1,n,k)> \lambda(n,n,k+1)+2r$, $a_{k+1}-a_k+2r(1-n)>2r$. As $j<n$,  $a_{k+1}-a_k+2r(1-j)>2r$, and thus $\lambda(1,j,k)> \lambda(j,j,k+1)+2r$.
\qed
\end{proof}

Consider any group $G_g$ with $1\leq g\leq l$ and any $j\in [1,n]$. For each $k\in G_g$, we create a new list $B[j,k]$ based on $A[j,k]$ in the same way as discussed before. Based on the new lists, we form the sorted array $B_g[j]$ of $|G_g|\cdot \alpha_g[j]$ elements. We also define the value $\beta_g[j]$ in the same way as before. The following lemma shows that $\alpha_g[j]$ and $\beta_g[j]$ can be computed based on $\alpha_g[n]$ and $\beta_g[n]$.

\begin{lemma}\label{lem:80}
For any $j\in [1,n-1]$ and $g\in[1,l]$, $\alpha_g[j]=\alpha_g[n]-n+j$ and $\beta_g[j]=\beta_g[n]+\delta_g\cdot (j-n)$, where $\delta_g=\sum_{g'=1}^g|G_{g'}|$.
\end{lemma}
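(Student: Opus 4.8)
Looking at Lemma 80, I need to prove two formulas relating the group parameters at index $j$ to those at index $n$.

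The plan is to compute $\alpha_g[j]$ and $\beta_g[j]$ directly from their definitions and show how they depend on $j$ in a linear fashion. First I would handle the $\alpha_g[j]$ formula. Recall that if $k_1$ and $k_2$ are the smallest and largest indices of $G_g$, then $\alpha_g[j]=1+\lfloor\frac{\lambda(j,j,k_1)-\lambda(1,j,k_2)}{2r}\rfloor$. Using $\lambda(i,j,k)=x_j-(a_k+2r(j-i)+r)$, I would substitute to get $\lambda(j,j,k_1)=x_j-a_k{}_1-r$ and $\lambda(1,j,k_2)=x_j-a_{k_2}-2r(j-1)-r$. Subtracting, the $x_j$ terms cancel and I get $\lambda(j,j,k_1)-\lambda(1,j,k_2)=a_{k_2}-a_{k_1}+2r(j-1)$. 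The key observation is that this expression depends on $j$ only through the additive term $2r(j-1)$, so inside the floor function dividing by $2r$, increasing $j$ by one simply adds $1$ to the quotient. Hence $\alpha_g[j]-\alpha_g[n]=(j-1)-(n-1)=j-n$, which gives exactly $\alpha_g[j]=\alpha_g[n]-n+j$.

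Next I would establish the $\beta_g[j]$ formula. By definition $\beta_g[j]=\sum_{g'=1}^g\alpha_{g'}[j]\cdot|G_{g'}|$. Substituting the formula just proved for each $\alpha_{g'}[j]$, I get
\[
\beta_g[j]=\sum_{g'=1}^g\bigl(\alpha_{g'}[n]-n+j\bigr)\cdot|G_{g'}|=\sum_{g'=1}^g\alpha_{g'}[n]\cdot|G_{g'}|+(j-n)\sum_{g'=1}^g|G_{g'}|.
\]
The first sum is precisely $\beta_g[n]$, and the second sum is $\delta_g=\sum_{g'=1}^g|G_{g'}|$ by definition, yielding $\beta_g[j]=\beta_g[n]+\delta_g\cdot(j-n)$ as claimed.

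The bulk of the argument is routine substitution, so there is no deep obstacle; the only point requiring care is the $\alpha_g[j]$ step, where I must verify that the floor function behaves cleanly under the shift in $j$. Since the numerator $a_{k_2}-a_{k_1}+2r(j-1)$ changes by exactly a multiple of $2r$ (namely $2r(j-n)$) as $j$ varies, the quantity $\lfloor\frac{a_{k_2}-a_{k_1}+2r(j-1)}{2r}\rfloor$ increases by precisely $j-n$ relative to its value at $n$, with no rounding anomalies. I would state this explicitly to make the floor manipulation rigorous, after which the $\beta_g[j]$ formula follows immediately by linearity of the summation.
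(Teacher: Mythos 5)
Your proof is correct and follows essentially the same route as the paper's: both expand $\lambda(j,j,k_1)-\lambda(1,j,k_2)=a_{k_2}-a_{k_1}+2r(j-1)$, observe that the $j$-dependence enters the floor only through an exact multiple of $2r$ (so that $\alpha_g[j]=j+\lfloor\frac{a_{k_2}-a_{k_1}}{2r}\rfloor$, hence $\alpha_g[j]=\alpha_g[n]-n+j$), and then derive the $\beta_g[j]$ identity by linearity of the defining sum. Your explicit justification that the floor shifts without rounding anomalies, because the numerator changes by exactly $2r(j-n)$, is a point the paper leaves implicit but is the same underlying argument.
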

\begin{proof}
Consider any $g\in [1,l]$. Let $k_1$ and $k_2$ be the smallest and largest indices in $G_g$, respectively. By definition, $\alpha_g[j]=1+\lfloor\frac{\lambda(j,j,k_1)-\lambda(1,j,k_2)}{2r}\rfloor
=1+\lfloor\frac{a_{k_2}-a_{k_1}+2r(j-1)}{2r}\rfloor
=j+\lfloor\frac{a_{k_2}-a_{k_1}}{2r}\rfloor$.
Therefore, for any $j\in [1,n-1]$, $\alpha_g[j]=\alpha_g[n]-n+j$.

By definition, $\beta_g[j]=\alpha_1[j]\cdot |G_1|+\alpha_2[j]\cdot |G_2|+\cdots+\alpha_g[j]\cdot |G_g|
=(\alpha_1[n]-n+j)\cdot |G_1|+(\alpha_2[n]-n+j)\cdot |G_2|+\cdots+(\alpha_g[n]-n+j)\cdot |G_g|=\beta_g[n]+(j-n)\cdot (|G_1|+|G_2|+\cdots+|G_g|)=\beta_g[n]+\delta_g\cdot (j-n)$.
\qed
\end{proof}

For each group $G_g$, we compute the permutation for the lists $B[n,k]$ for all $k$ in the group. Computing the permutations for all groups takes $O(m\log m)$ time.
Also as preprocessing, we first compute $\delta_g$, $\alpha_g(n)$ and $\beta_g(n)$ for all $g\in [1,l]$ in $O(m)$ time. By Lemma~\ref{lem:80}, for any $j\in [1,n]$ and any $g\in [1,l]$, we can compute $\alpha_g[j]$ and $\beta_g[j]$ in $O(1)$ time. Because the lists $B[n,k]$ for all $k$ in each group $G_g$ have the intra-group overlapping property, it holds that $\alpha_g[n]\leq |G_g|\cdot (n+1)$. Hence, $\sum_{g=1}^l\alpha_g[n]\leq m(n+1)$.  For any $j\in [1,n-1]$, by Lemma~\ref{lem:80}, $\alpha_g[j]<\alpha_g[n]$, and thus $\sum_{g=1}^l\alpha_g[j]\leq m(n+1)$. Recall that $B[j]$ is the sorted array of all elements in $B_g[j]$ for $g\in [1,l]$. Thus, $B[j]$ has at most $m^2(n+1)$ elements.

For any $j\in [1,n]$ and any $t\in [1,\sum_{g=1}^l|G_g|\cdot \alpha_g[j]]$, suppose we want to compute the $t$-th smallest element of $B[j]$. Due to the inter-group non-overlapping property in Lemma~\ref{lem:nonoverlap}, we can still use the previous binary search approach. For the running time, since we can obtain each $\beta_g[j]$ for any $g\in [1,l]$ in $O(1)$ time by Lemma~\ref{lem:80}, we can still compute the $t$-th smallest element of $B[j]$ in $O(\log m)$ time.

This completes the proof of Lemma~\ref{lem:arrays}.

\section{The Decision Problem of MBC}
\label{sec:decision}

In this section, we present an $O(m+n\log n)$-time algorithm for the decision problem of MBC: given any value $\lambda>0$, determine whether $\lambda\geq \lambda^*$.
Our algorithm for MBC in Section \ref{sec:optimization} will make use of this decision algorithm. The decision problem may have independent interest because in some applications each sensor has a limited energy $\lambda$ and we want to know whether their energy is enough for them to move to cover all barriers.

Consider any value $\lambda>0$. We assume that $\lambda\geq
\max_{1\leq i\leq n}|y_i|$ since otherwise some sensor cannot reach
$L$ by moving $\lambda$ (and thus $\lambda$ is not feasible).
For any sensor $s_i\in S$,
define $x_i^r=x_i+\sqrt{\lambda^2-y_i^2}$ and $x_i^l=x_i-\sqrt{\lambda^2-y_i^2}$.  Note that $x_i^r$ and $x_i^l$ are respectively the rightmost and leftmost points of $L$ $s_i$ can reach with respect to $\lambda$. We call $x_i^r$ the {\em rightmost (resp., leftmost) $\lambda$-reachable location} of $s_i$ on $L$. For any point $x$ on $L$, we use
$p^+(x)$ to denote a point $x'$ such that $x'>x$ and $x'$ is infinitesimally close to $x$.

The high-level scheme of our algorithm is similar to that in
\cite{ref:LeeMi17}. A major difference is that we need to handle the gaps between barriers.
In the following, while we will emphasize the difference between our algorithm with that in~\cite{ref:LeeMi17}, to make the paper more self-contained, we will also include certain details that may overlap with those in~\cite{ref:LeeMi17}.
We first describe the algorithm and then show its correctness. The implementation of the algorithm will be discussed afterwards.

\subsection{The Algorithm Description}
\label{sec:description}

We use a {\em configuration} to refer to a specification on where each sensor
$s_i\in S$ is located. For example, in the {\em input configuration}, each $s_i$
is at $(x_i,y_i)$.

We begin with moving each sensor $s_i$ to $x_i^r$ on $L$.  Let $C_0$ denote the
resulting configuration. In $C_0$, each sensor $s_i$ is not
allowed to move rightwards but can move leftwards on $L$ by a maximum distance $2\sqrt{\lambda^2-y_i^2}$.

If $\lambda\geq \lambda^*$, our algorithm will compute a subset of sensors with
their new locations to cover all barriers of $\calB$ and the maximum movement of each sensor in the subset is at most $\lambda$.

For each step $i$ with $i\geq 1$, let $C_{i-1}$ be the configuration
right before the $i$-th step. Our algorithm maintains the following {\em
invariants}. (1) We have a subset of sensors
$S_{i-1}=\{s_{g(1)},s_{g(2)},\ldots, s_{g(i-1)}\}$, where for each $1\leq j\leq
i-1$, $g(j)$ is the index of the sensor $s_{g(j)}$ in $S$.  (2) In $C_{i-1}$, each sensor $s_k$ of $S_{i-1}$ is at a new location $x'_k\in [x^l_k,x^r_k]$, and all other sensors are still in their locations of $C_0$.
(3) A value $R_{i-1}$ is maintained such that $0\leq R_{i-1}<\beta$, $R_{i-1}$ is on a barrier, and every barrier point $x< R_{i-1}$ is covered by a sensor of $S_{i-1}$ in $C_{i-1}$. (4) If $R_{i-1}$ is not at the left endpoint of a barrier, then
$R_{i-1}$ is covered by a sensor of $S_{i-1}$ in $C_{i-1}$. (5) The point $p^+(R_{i-1})$ is not covered by any sensor in $S_{i-1}$.

Initially when $i=1$, we let $S_0=\emptyset$ and $R_0=0$, and thus
all algorithm invariants hold for $C_0$. The $i$-th step of the algorithm finds a
sensor $s_{g(i)}\in S\setminus S_{i-1}$ and moves it to a new location $x'_{g(i)}\in
[x^l_{g(i)},x^r_{g(i)}]$ and thus obtains a new configuration $C_i$.
The details are given below.

Define $S_{i1}$ to be the set of sensors that cover the point
$p^+(R_{i-1})$ in $C_{i-1}$, i.e.,
$S_{i1}=\{s_k\ |\ x_k^r-r\leq R_{i-1} < x_k^r+r\}$. By the
algorithm invariant (5), no sensor in $S_{i-1}$ covers $p^+(R_{i-1})$.
Thus, $S_{i1}\subseteq S\setminus S_{i-1}$.
If $S_{i1}\neq \emptyset$, then we choose an {\em arbitrary} sensor
in $S_{i1}$ as $s_{g(i)}$
(e.g., see Fig.~\ref{fig:decisionalgo1}) and let $x'_{g(i)}=x^r_{g(i)}$.
We then set $R_i=x'_{g(i)}+r$, i.e., $R_i$ is at the right endpoint of the covering interval of $s_{g(i)}$. Note that $C_i$ is identical to $C_{i-1}$ because $s_{g(i)}$ is not moved.

If $S_{i1}=\emptyset$, then we define $S_{i2}=\{s_k\ |\ x^l_k-r\leq
R_{i-1}<x_k^r-r\}$ (i.e., $S_{i2}$ consists of those sensors $s_k$ that does not
cover $R_{i-1}$ when it is at $x_k^r$ but is possible to do so when it is at some location in $[x_k^l,x_k^r]$).
If $S_{i2}\neq \emptyset$, we choose the {\em leftmost}
sensor of $S_{i2}$ as $s_{g(i)}$ (i.e., sensor whose location in $C_{i-1}$ has the smallest $x$-coordinate; e.g., see Fig.~\ref{fig:decisionalgo2}), and let
$x'_{g(i)}=R_{i-1}+r$ (i.e., we move $s_{g(i)}$ to $x'_{g(i)}$ and thus obtain $C_i$).
If $S_{i2}= \emptyset$, then we conclude that $\lambda<\lambda^*$ and
terminate the algorithm.

Hence, if $S_{i1}=S_{i2}=\emptyset$, the algorithm will stop and report
$\lambda<\lambda^*$. Otherwise, a sensor $s_{g(i)}$ is found from either $S_{i1}$ or $S_{i2}$, and it is moved to $x'_{g(i)}$.
In either case, $R_i=x'_{g(i)}+r$ and $S_i=S_{i-1}\cup \{s_{g(i)}\}$. If $R_i\geq \beta$, then we terminate the algorithm and report $\lambda\geq \lambda^*$.
Otherwise, we further perform the following {\em jump-over procedure}:
We check whether $R_i$ is located at the interior of any barrier; if not,
then we set $R_i$ to the left endpoint of the barrier right after $R_i$.

This finishes the $i$-th step of our algorithm. One can verify that all algorithm invariants are maintained. As there are $n$ sensors in $S$, the algorithm will finish in at most
$n$ steps.

\begin{figure}[t]
\begin{minipage}[t]{0.39\linewidth}
\begin{center}
\includegraphics[totalheight=0.8in]{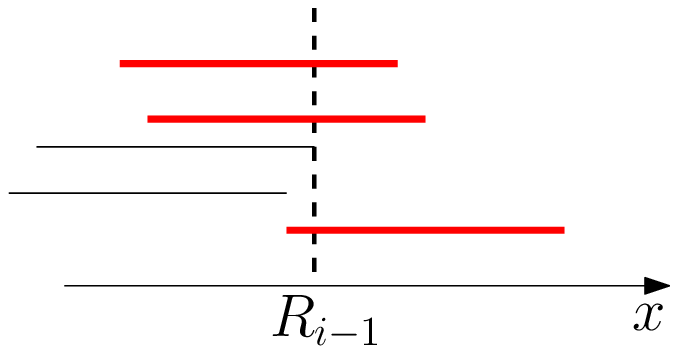}
\caption{\footnotesize Illustrating the set $S_{i1}$.
The covering intervals of sensors are shown with segments (the red thick segments
correspond to the sensors in $S_{i1}$).
Every sensor in $S_{i1}$ can be $s_{g(i)}$. }
\label{fig:decisionalgo1}
\end{center}
\end{minipage}
\hspace{0.05in}
\begin{minipage}[t]{0.59\linewidth}
\begin{center}
\includegraphics[totalheight=0.8in]{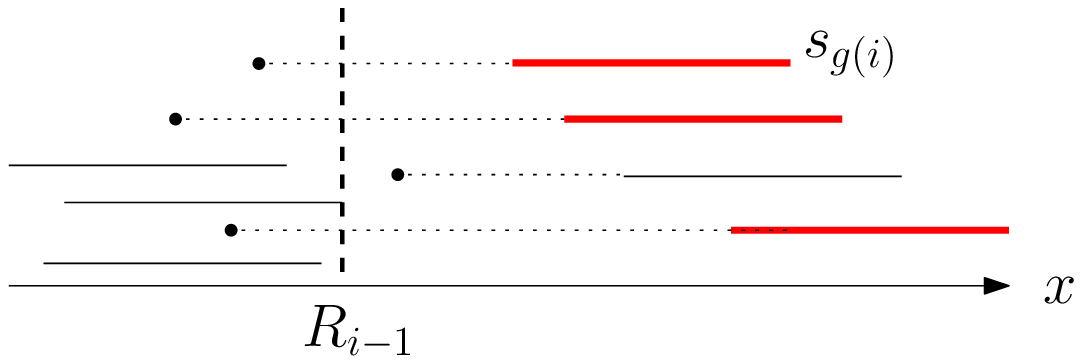}
\caption{\footnotesize Illustrating the set $S_{i2}$.
The segments are the covering intervals of sensors. The red thick segments
correspond to the sensors in $S_{i2}$. The four black points corresponding to the values $x_k^l-r$ of the four sensors $x_k$ to the right of $R_{i-1}$.
The sensor $s_{g(i)}$ is labeled. }
\label{fig:decisionalgo2}
\end{center}
\end{minipage}
\vspace*{-0.15in}
\end{figure}

\subsection{The Algorithm Correctness}
\label{sec:correct}
The correctness proof is similar to that for the algorithm
in~\cite{ref:LeeMi17}, so we briefly discuss it.

If the decision algorithm reports $\lambda\geq \lambda^*$, say, in the $i$-th step, then according to our algorithm, the configuration $C_i$ is a feasible solution. Below, we show that if the algorithm reports $\lambda<\lambda^*$, then $\lambda$ is indeed not a feasible value.

We first note that due to our jump-over procedure and our general position assumption, $R_i$ cannot be at the right endpoint of a barrier, and thus $p^+(R_i)$ must be a point of a barrier.

An interval on $L$ is said to be {\em left-aligned} if its left side is closed and equal to $0$ and its right side is open. The algorithm correctness will be easily shown with the following Lemma~\ref{lem:10}. The proof of the lemma is very similar to Lemma 1 in~\cite{ref:LeeMi17}, so we omit it.

\begin{lemma}\label{lem:10}
Consider any configuration $C_i$. Suppose $S'_i$ is the set of sensors in $S$ whose
right extensions are at most $R_i$ in $C_i$. Then, the interval $[0,R_i)$ is
the largest possible left-aligned interval such that all barrier points in the interval can be covered by the sensors of $S'_i$ with respect to $\lambda$ (i.e., the moving distance of each sensor of $S'_i$ is at most $\lambda$).
\end{lemma}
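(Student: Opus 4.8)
The plan is to prove the statement by induction on $i$, separating the two claims implicit in the word ``largest'': (i) the \emph{coverability} claim, that every barrier point of $[0,R_i)$ can be covered by the sensors of $S'_i$ with movement at most $\lambda$; and (ii) the \emph{maximality} claim, that no left-aligned interval properly containing $[0,R_i)$ has this property. As observed just before the lemma, the jump-over procedure together with the general position assumption guarantees that $p^+(R_i)$ is a barrier point. Since every interval $[0,R')$ with $R'>R_i$ contains $p^+(R_i)$, the maximality claim reduces to a single statement: the barrier point $p^+(R_i)$ cannot be covered by $S'_i$ while all of $[0,R_i)$ is simultaneously covered. The base case $i=0$ is immediate, because $R_0=0$, $[0,0)$ is empty, and any sensor of $S'_0$ has right extension $x_k^r+r\le 0$, so even at its rightmost location it cannot reach the barrier point $p^+(0)$.

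The coverability claim follows directly from the algorithm invariants. By invariant (3), in $C_i$ every barrier point $x<R_i$ is covered by a sensor of $S_i$; and since each selected sensor $s_{g(j)}$ has right extension $x'_{g(j)}+r=R_j\le R_i$, we have $S_i\subseteq S'_i$, so $S'_i$ covers all of $[0,R_i)$. The real content is the maximality claim, which I would prove by a \emph{greedy-stays-ahead}/exchange argument of the same flavor as Lemma~1 in~\cite{ref:LeeMi17}. Assume for contradiction that some $T\subseteq S'_i$ covers all barrier points of $[0,R')$ for some $R'>R_i$, and in particular covers $p^+(R_i)$. Let $s_k\in T$ be a sensor whose covering interval contains $p^+(R_i)$ in this hypothetical solution; then its right endpoint exceeds $R_i$, its location exceeds $R_i-r$, and hence its rightmost reachable location satisfies $x_k^r+r>R_i$.

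The crux, and the step I expect to be the main obstacle, is to rule out such an $s_k$, which forces one to exploit how membership in $S'_i$ interacts with the greedy choices. The subtlety is that $S'_i$ is defined through the positions in $C_i$, not through the rightmost reachable extensions: if $s_k\notin S_i$, then in $C_i$ it sits at $x_k^r$ with right extension $x_k^r+r\le R_i$, contradicting $x_k^r+r>R_i$; so $s_k$ must be a \emph{selected} sensor, and moreover one moved strictly leftward (a choice from some $S_{j2}$ in an earlier step), since a sensor taken from $S_{j1}$ is placed at its rightmost location and has right extension $R_j\le R_i$. I would then reach a contradiction by an exchange/cascade argument: reassigning $s_k$ to reach past $R_i$ vacates the gap at $R_{j-1}$ it was originally moved to fill, and because each step selects the \emph{leftmost} eligible sensor of $S_{i2}$ (or an arbitrary sensor of $S_{i1}$ already reaching $p^+(R_{i-1})$), no sensor of $S'_i$ remains to re-cover that vacated gap without pushing the deficit rightward and recreating the need to cover $p^+(R_i)$. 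Formalizing this cascade amounts to invoking the inductive hypothesis that $R_{i-1}$ is already the farthest left-aligned reach of $S'_{i-1}$, and then checking that step $i$ advances the front to the farthest possible new value $R_i$; this is precisely the argument of~\cite{ref:LeeMi17}, and I would adapt it essentially verbatim, the only new bookkeeping being the barrier gaps dealt with by the jump-over procedure.
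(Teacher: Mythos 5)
Your approach is essentially the paper's own: the paper omits this proof, stating only that it is very similar to Lemma~1 in~\cite{ref:LeeMi17}, and your proposal supplies exactly the intended skeleton---reducing maximality to the single barrier point $p^+(R_i)$ (a barrier point thanks to the jump-over procedure and the general position assumption), getting coverability of $[0,R_i)$ from the algorithm invariants via $S_i\subseteq S'_i$, observing that any sensor of $S'_i$ with $x_k^r+r>R_i$ must have been selected from some $S_{j2}$ (unselected or $S_{j1}$-selected sensors sit at $x_k^r$ in $C_i$), and adapting the inductive exchange/cascade of~\cite{ref:LeeMi17} with the gaps handled by the jump-over step. The only slip is cosmetic: because the jump-over procedure may advance $R_j$ past the right extension of $s_{g(j)}$, one has $x'_{g(j)}+r\le R_j$ rather than equality, which leaves your conclusion $S_i\subseteq S'_i$ intact.
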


Suppose our algorithm reports $\lambda<\lambda^*$ in the $i$-th step. We show that $\lambda$ is not a feasible value. Indeed,
according to our algorithm, $R_{i-1}<\beta$ and $S_{i1}=S_{i2}=\emptyset$ in the configuration $C_{i-1}$.
Let $S_{i-1}'$ be the set of sensors whose right extensions are at most
$R_{i-1}$ in $C_{i-1}$. On the one hand, by Lemma~\ref{lem:10} (replacing index $i$ in the lemma by $i-1$), $[0,R_{i-1})$ is the largest left-aligned
interval such that all barrier points in the interval that can be covered by the sensors in $S'_{i-1}$. On the other hand, since
both $S_{i1}$ and $S_{i2}$ are empty, no sensor in $S\setminus S'_{i-1}$ can cover the point $p^+(R_{i-1})$. Recall that $p^+(R_{i-1})$ is a barrier point not covered by any sensor in $S_{i-1}$.
Due to $R_{i-1}<\beta$, we conclude that sensors of $S$ cannot
cover all barrier points in the interval $[0,p^+(R_{i-1})]\subseteq[0,\beta]$ with respect to $\lambda$. Thus, $\lambda$ is not a feasible value.
This establishes the correctness of our decision algorithm.

\subsection{The Algorithm Implementation}
\label{sec:implement}

The implementation is similar to that in~\cite{ref:LeeMi17} and we briefly discuss it. We first implement the algorithm in $O(m+n\log n)$ time, and then we reduce the time to $O(m+n\log\log n)$ under certain assumption. Then latter result will be useful in Section~\ref{sec:optimization}.

We first move each sensor $s_i$ to $x_i^r$ and thus obtain the configuration $C_0$. Then, we sort the extensions of all sensors in $C_0$, and merge the sorted list with the sorted sequence of the endpoints of all barriers. To maintain the set $S_{i1}$ during the algorithm, we sweep a point $p$ on $L$ from left to right. During the sweeping, when $p$ encounters the left (resp., right) extension of a sensor, we insert the sensor into $S_{i1}$ (resp., delete it from $S_{i1}$). In this way, in each $i$-th step of the algorithm, when $p$ is at $R_{i-1}$, $S_{i1}$ is available.

If $S_{i1}\neq \emptyset$, we  pick an arbitrary sensor in
$S_{i1}$ as $s_{g(i)}$. To store the set $S_{i1}$, since sensors
have the same range, the earlier a sensor is inserted into $S_{i1}$, the earlier it is deleted from $S_{i1}$. Thus, we can simply use a
first-in-first-out queue to store $S_{i1}$ such that each insertion/deletion can be done in constant time. We can always take the front sensor in the queue as $s_{g(i)}$.

If $S_{i1}= \emptyset$,
then we need to compute $S_{i2}$.
To maintain $S_{i2}$ during the sweeping of $p$, we do the following.
Initially when we do the sorting as discussed above, we also sort the $n$ values $x_i^l-r$ for all $1\leq i\leq n$. During
the sweeping of $p$, if $p$ encounters a point $x_k^l-r$ for
some sensor $s_k$, we insert $s_k$ to $S_{i2}$, and if $p$ encounters
a left extension of some sensor $s_k$, we delete $s_k$ from $S_{i2}$.
In this way, when $p$ is at $R_{i-1}$, $S_{i2}$ is available. If $S_{i2}\neq \emptyset$, we need to
find the leftmost sensor in $S_{i2}$ as $s_{g(i)}$, for which we use a balanced binary search tree $T$
to store all sensors of $S_{i2}$ where the ``key'' of each sensor $s_k$ is the
value $x_k^r$. $T$ can support each of the following operations on $S_{i2}$ in $O(\log n)$ time: inserting a sensor, deleting a sensor, finding the leftmost sensor.

If $s_{g(i)}$ is from $S_{i1}$, then we do not need to
move $s_{g(i)}$. We proceed to sweep $p$ as usual. If $s_{g(i)}$ is
from $S_{i2}$, we need to move $s_{g(i)}$ leftwards to
$x'_{g(i)}=R_{i-1}+r$. Since $s_{g(i)}$ is moved, we should also update the
original sorted list including the extensions of all sensors in
$C_0$ to guide the future sweeping of $p$. To avoid the explicit update, we
use a flag table for all sensor extensions in $C_0$. Initially,
every table entry is {\em valid}. If $s_{g(i)}$ is moved, then we set the
table entries of the two extensions of the sensor {\em invalid}. During the sweeping of $p$, when $p$ encounters a sensor extension, we first check the table to see whether the extension is still valid. If yes, then we proceed as usual; otherwise we ignore the event. This only costs extra constant time for each event.
In addition, we calculate $R_i$ as discussed before, and the jump-over procedure can be implemented in $O(1)$ time since the barrier endpoints are also sorted.

To analyze the running time, since the barriers are given sorted on $L$, the sorting step takes $O(m+ n\log n)$. Since there are $O(n)$ operations on the tree $T$, the total time of the algorithm is $O(m+ n\log n)$.
Thus we obtain the following result.

\begin{theorem}\label{theo:10}
Given any value $\lambda$, we can determine whether $\lambda\geq
\lambda^*$ in  $O(m+n\log n)$ time.
\end{theorem}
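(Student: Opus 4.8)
The plan is to obtain Theorem~\ref{theo:10} by combining the two facts already developed for the greedy sweep: that it answers the feasibility question correctly, and that it can be implemented within the stated bound. Since the algorithm itself is fully specified in Section~\ref{sec:description}, the proof is really a synthesis of the correctness argument of Section~\ref{sec:correct} and the implementation analysis of Section~\ref{sec:implement}, so I would simply assemble those two pieces.

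For correctness I would argue both directions. If the algorithm reports $\lambda\geq\lambda^*$ at step $i$, then invariants (1)--(4) guarantee that in $C_i$ every sensor of $S_i$ sits within distance $\lambda$ of its input location and every barrier point below $R_i\geq\beta$ is covered; hence $C_i$ is a feasible placement and $\lambda$ is feasible. For the converse, if the sweep halts with $R_{i-1}<\beta$ and $S_{i1}=S_{i2}=\emptyset$, I would invoke Lemma~\ref{lem:10} applied to $C_{i-1}$: the interval $[0,R_{i-1})$ is the largest left-aligned interval coverable by the sensors whose right extensions are at most $R_{i-1}$, while the emptiness of $S_{i1}$ and $S_{i2}$ means no remaining sensor can reach the barrier point $p^+(R_{i-1})$. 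Consequently no placement of $S$ covers $[0,p^+(R_{i-1})]\subseteq[0,\beta]$ within distance $\lambda$, so $\lambda<\lambda^*$.

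For the running time I would charge each component of Section~\ref{sec:implement}. The preprocessing sorts the $2n$ sensor extensions together with the $n$ values $x_k^l-r$ and merges them with the barrier endpoints; because the barriers arrive pre-sorted, this costs $O(m+n\log n)$. The subsequent sweep processes $O(m+n)$ events: $S_{i1}$ is held in a FIFO queue (legitimate because equal ranges make the insertion order coincide with the deletion order), giving $O(1)$ per operation; $S_{i2}$ is held in a balanced search tree keyed by $x_k^r$, giving $O(\log n)$ per insertion, deletion, and leftmost query; the flag table resolves the at most $n$ moved sensors lazily in $O(1)$ per event; and the jump-over step is $O(1)$ on the sorted endpoints. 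With $O(n)$ steps and $O(\log n)$ tree work each, the total is $O(m+n\log n)$. The part I expect to require the most care is not this bookkeeping but verifying that all five invariants survive every case of a step --- the arbitrary choice from $S_{i1}$, the leftmost choice from $S_{i2}$, and in particular the jump-over across gaps between barriers, where one must confirm that $R_i$ again lands on a barrier and that invariants (3)--(5) are restored. This gap handling is exactly where the algorithm departs from the single-barrier case of~\cite{ref:LeeMi17}, so it is the crux of the correctness direction, after which Lemma~\ref{lem:10} supplies the infeasibility conclusion.
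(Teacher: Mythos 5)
Your proposal is correct and takes essentially the same approach as the paper, which likewise obtains Theorem~\ref{theo:10} by assembling the correctness argument of Section~\ref{sec:correct} (feasibility of $C_i$ upon success; Lemma~\ref{lem:10} combined with $S_{i1}=S_{i2}=\emptyset$ and the jump-over guarantee that $p^+(R_{i-1})$ is a barrier point upon failure) with the implementation analysis of Section~\ref{sec:implement} (sorting in $O(m+n\log n)$, a FIFO queue for $S_{i1}$, a balanced tree keyed by $x_k^r$ for $S_{i2}$, the lazy flag table, and $O(1)$ jump-over on pre-sorted barrier endpoints). Your closing observation that maintaining the five invariants across the jump-over procedure is the step needing the most care is apt, since the paper asserts this maintenance without detailed verification.
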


Our algorithm in Section \ref{sec:optimization} will perform feasibility tests multiple times, for which we have the following result.

\begin{lemma}\label{lem:decalgo2}
Suppose the values $x_i^r$ for all $i=1,2,\ldots,n$ are already sorted.
We can determine whether $\lambda\geq
\lambda^*$ in  $O(m+n\log\log n)$ time for any $\lambda$.
\end{lemma}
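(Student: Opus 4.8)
<br>

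The plan is to revisit the $O(m+n\log n)$-time implementation from Theorem~\ref{theo:10} and identify exactly where the logarithmic factor arises, then eliminate it using the fact that the values $x_i^r$ are pre-sorted. In that implementation, the cost is dominated by three kinds of work: the initial sorting of all sensor extensions and barrier endpoints, the first-in-first-out queue operations that maintain $S_{i1}$ (already $O(1)$ per operation), and the balanced-binary-search-tree operations on $S_{i2}$ that support insertion, deletion, and finding-the-leftmost, each currently costing $O(\log n)$. The barrier endpoints contribute only $O(m)$ since the barriers are given sorted, so the two remaining bottlenecks are (i) producing the required sorted orders of sensor extensions and (ii) the $O(\log n)$ per-operation cost of the dynamic structure $T$ for $S_{i2}$.

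First I would argue that the sorting bottleneck disappears under the pre-sorted assumption. Since $x_i^r=x_i+\sqrt{\lambda^2-y_i^2}$ is given in sorted order, the sorted sequences of the left extensions $x_i^r-r$ and right extensions $x_i^r+r$ are obtained in $O(n)$ time simply by subtracting and adding the constant $r$, which preserves order. Likewise, the leftmost reachable locations satisfy $x_i^l=x_i^r-2\sqrt{\lambda^2-y_i^2}$, and the values $x_i^l-r$ needed to drive the insertions into $S_{i2}$ can be placed in sorted order; if the order of the $x_i^l-r$ values differs from that of the $x_i^r$ values, this still requires only a single $O(n)$-time merge against an already-available sorted list or, at worst, one sort that can be folded into preprocessing. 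The key point is that the per-feasibility-test sorting cost is now $O(m+n)$ rather than $O(m+n\log n)$.

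Next I would replace the balanced binary search tree $T$ that maintains $S_{i2}$ keyed by $x_i^r$ with a faster integer-style or predecessor data structure. Because the relevant keys are drawn from the fixed, pre-sorted universe of the $x_i^r$ values, each sensor can be identified with its rank in this universe, so the keys live in the integer range $[1,n]$. I would then use a van Emde Boas tree (or a $y$-fast trie) over $\{1,\ldots,n\}$, which supports insertion, deletion, and finding-the-minimum (the leftmost sensor) each in $O(\log\log n)$ time. Substituting this structure for $T$, the total cost of the $O(n)$ dynamic operations becomes $O(n\log\log n)$ instead of $O(n\log n)$, while the FIFO queue for $S_{i1}$ and the flag-table mechanism for handling moved sensors remain unchanged at $O(1)$ amortized per event. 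Summing the $O(m+n)$ preprocessing/sweeping cost with the $O(n\log\log n)$ data-structure cost yields the claimed $O(m+n\log\log n)$ bound.

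The main obstacle I anticipate is verifying that \emph{all} dynamic operations truly reduce to the fixed integer universe $\{1,\ldots,n\}$ of pre-sorted $x_i^r$ ranks, so that the van Emde Boas bound applies cleanly. In particular, the insertions into $S_{i2}$ are triggered by the sweep point $p$ crossing the values $x_i^l-r$, whereas deletions and the leftmost-query are governed by the $x_i^r$-based ordering; I must confirm that every query and update can be phrased in terms of these fixed ranks (using the flag table to skip invalidated events and to handle sensors already committed to $S_{i-1}$), and that no operation implicitly requires comparing arbitrary real values outside the precomputed sorted order. Once this universe-reduction is established, the $O(\log\log n)$ per-operation guarantee follows directly and the remaining accounting is routine.
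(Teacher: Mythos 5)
Your proposal is correct and takes essentially the same route as the paper's own proof: both observe that the pre-sorted $x_i^r$ (together with the sorted barriers) reduce the sorting phase to $O(n+m)$, and both replace the balanced binary search tree for $S_{i2}$ by identifying each sensor with the rank of $x_k^r$ in the fixed universe $[1,n]$ and using a van Emde Boas tree, so that insert, delete, and find-leftmost each cost $O(\log\log n)$, giving $O(m+n\log\log n)$ overall. The one subtlety you flag --- that the sorted order of the $\lambda$-dependent values $x_i^l-r$ is not implied by that of the $x_i^r$ --- is a point the paper's proof passes over silently, so your version is, if anything, slightly more careful on that front.
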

\begin{proof}
Our $O(m+n\log n)$ time implementation is dominated by two parts. The first
part is the sorting. The second part is on performing the operations on the set $S_{i2}$, each taking $O(\log n)$ time by using the tree $T$. The
rest of the algorithm together takes $O(n+m)$ time. Because the values $x_i^r$ for all $i=1,2,\ldots,n$ are already sorted and the barriers are also given sorted, the sorting step takes $O(n+m)$ time.

Recall that the keys of the sensors of $T$ are the values
$x_k^r$. Let $Q =\{x_k^r\ |\ 1\leq k\leq n \}$. For each sensor $s_k$, we use $rank(s_k)$ to denote the {\em rank} of $x_k^r$ in $Q$ (i.e., $rank(s_k)=t$ if $x_k^r$ is the $t$-th smallest value in $Q$). Since $Q$ is already sorted, all sensor ranks can be computed in $O(n)$ time.  It is
easy to see that the leftmost sensor of $T$ is the sensor with the
smallest rank. Therefore, we can also use the ranks as the keys of sensors of $T$, and the advantage of doing so is that the rank
of each sensor is an integer in $[1,n]$. Hence, instead of using a
balanced binary search tree, we can use an integer data structure,
e.g., the van Emde Boas Tree (or vEB tree for short) \cite{ref:CLRS09},
to maintain $S_{i2}$. The vEB tree can support each of the following operations
on $S_{i2}$ in $O(\log\log n)$ time \cite{ref:CLRS09}:
inserting a sensor, deleting a sensor, and
finding the sensor with the smallest rank. Using a vEB tree,
all operations on $S_{i2}$ in the algorithm can be
performed in $O(n\log\log n)$ time.
The lemma thus follows.
\qed
\end{proof}

\section{Solving the Problem MBC}
\label{sec:optimization}
In this section, we solve the problem MBC. It suffices to compute $\lambda^*$.

It might be tempting to use the similar algorithm scheme as that for the line-constrained version of the problem in Section~\ref{sec:line}, i.e., first identify a set of candidate values for $\lambda^*$ and then search the set to find $\lambda^*$ by using the decision algorithm. However, we find some obstacles to do so. For example, as mentioned in Section~\ref{sec:approach}, one major difficulty of the problem is that we do not know the order of the sensors of $S$ on $L$ in an optimal solution. Because of that, it is not clear how to identify a relatively small set (e.g., of polynomial size) of candidate values for $\lambda^*$. Therefore, we have to resort to other techniques to tackle the problem. The high-level scheme of our algorithm is similar to that in~\cite{ref:LeeMi17}, although some low-level details are different.

In this section, we use $x_i^r(\lambda)$ to refer to $x_i^r$ for any $\lambda$, so that we consider $x_i^r(\lambda)$ as a function on $\lambda \in[0,\infty]$, which actually defines a half of the upper branch (on the right side of the $y$-axis) of a hyperbola (in contrast, the corresponding function in~\cite{ref:LeeMi17} is linear, and thus our result is more general than that in~\cite{ref:LeeMi17}). Let $\sigma$ be the order of the values $x_i^r(\lambda^*)$ for all $i\in [1,n]$. To make use of Lemma~\ref{lem:decalgo2}, we first run a preprocessing step in Lemma~\ref{lem:pre}. Note that Lemma~\ref{lem:pre} is not from~\cite{ref:LeeMi17}, and we will discuss in Section~\ref{sec:conclusion} that the technique can be used to reduce the space complexity of the algorithm in~\cite{ref:LeeMi17}.

\begin{lemma}\label{lem:pre}
With $O(n\log^3 n+m\log^2 n)$ time preprocessing, we can compute $\sigma$ and an interval $(\lambda^*_1,\lambda^*_2]$ containing $\lambda^*$ such that $\sigma$ is also the order of the values $x_i^r(\lambda)$ for any $\lambda\in (\lambda^*_1,\lambda^*_2]$.
\end{lemma}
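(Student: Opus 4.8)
The plan is to view each $x_i^r(\lambda)=x_i+\sqrt{\lambda^2-y_i^2}$ as a curve over $\lambda\in[\max_i|y_i|,\infty)$ and to locate $\lambda^*$ inside the maximal interval on which the vertical order of these $n$ curves is constant; that constant order is exactly $\sigma$ and the interval is $(\lambda_1^*,\lambda_2^*]$. The first step is to establish the structural fact that any two of these curves cross at most once: the difference $\sqrt{\lambda^2-y_i^2}-\sqrt{\lambda^2-y_j^2}$ is monotone in $\lambda$ on the common domain (its derivative has a fixed sign, determined by whether $|y_i|>|y_j|$), so the equation $x_i^r(\lambda)=x_j^r(\lambda)$ has a unique root $\lambda_{ij}$, which is computable in $O(1)$ time. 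Consequently the comparison ``$x_i^r(\lambda^*)<x_j^r(\lambda^*)$?'' reduces to deciding on which side of the single threshold $\lambda_{ij}$ the value $\lambda^*$ lies, and this in turn is settled by one feasibility test using Theorem~\ref{theo:10}.

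Next I would compute $\sigma$ by a parametric search, in the style of Megiddo and Cole, driven by a sorting network. Concretely, I would run an $O(\log n)$-depth parallel sorting algorithm (e.g. Cole's parallel merge sort) on the values $x_i^r(\lambda^*)$, maintaining throughout an interval $(\lambda_1,\lambda_2]$ that is always known to contain $\lambda^*$ and to be consistent with every comparison resolved so far. The comparisons within a single layer of the network are mutually independent, so for each layer I would collect the critical values $\lambda_{ij}$ of its comparisons that still lie inside the current interval, repeatedly select their median, and invoke the decision algorithm once on that median; each test discards a constant fraction of the remaining critical values and shrinks $(\lambda_1,\lambda_2]$, so all comparisons of a layer are resolved with $O(\log n)$ feasibility tests. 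After the last layer, every comparison performed by the network has been resolved, and the surviving interval is returned as $(\lambda_1^*,\lambda_2^*]$.

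For correctness of the returned interval, the key observation is that a sorting network sorts \emph{every} input. For any $\lambda\in(\lambda_1^*,\lambda_2^*]$, no critical value of an executed comparison lies strictly inside the interval, so the network performs exactly the same sequence of comparison outcomes as it does at $\lambda^*$ and therefore outputs the same permutation $\sigma$; since that output is by definition the sorted order of $\{x_i^r(\lambda)\}$, the order is $\sigma$ throughout $(\lambda_1^*,\lambda_2^*]$, and $\lambda^*$ lies in the interval by construction. For the running time, there are $O(\log n)$ layers and $O(\log n)$ tests per layer, giving $O(\log^2 n)$ feasibility tests; by Theorem~\ref{theo:10} each test costs $O(m+n\log n)$, contributing $O(n\log^3 n+m\log^2 n)$, while computing critical values and performing the median selections across all layers add only $O(n\log^2 n)$, which is absorbed.

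I expect the main obstacle to be the batching that caps the number of feasibility tests at $O(\log^2 n)$: one must argue carefully that a layer's independent comparisons can all be resolved by binary search over their critical values using a single decision call per step, while correctly ignoring the critical values that have already fallen outside $(\lambda_1,\lambda_2]$ (these need no test) and correctly updating the interval after each test. A secondary but essential point is the at-most-one-crossing property; without the monotonicity argument a comparison would not reduce to a single threshold $\lambda_{ij}$, and the entire reduction from order determination to threshold tests on $\lambda^*$ would break down.
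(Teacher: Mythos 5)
Your proposal is correct and follows essentially the same route as the paper: the paper likewise observes that any two functions $x_i^r(\lambda)$ cross at most once and then invokes Megiddo's parametric search with the decision algorithm of Theorem~\ref{theo:10} as the oracle, obtaining the same $O((\tau+n)\log^2 n)=O(n\log^3 n+m\log^2 n)$ bound and the interval $(\lambda^*_1,\lambda^*_2]$ as a byproduct. The only difference is that you unfold the parametric-search black box (sorting network layers, batched comparisons, median selection over critical values), which the paper simply cites.
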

\begin{proof}
To compute $\sigma$, we apply Megiddo's parametric search~\cite{ref:MegiddoAp83} to sort the values $x_i^r(\lambda^*)$ for $i\in [1,n]$, using the decision algorithm in Theorem~\ref{theo:10}.
Indeed, recall that $x_i^r(\lambda)=x_i+ \sqrt{\lambda^2-y_i^2}$. Hence, as $\lambda$ increases, $x_i^r(\lambda)$ is a (strictly) increasing function.
For any two indices $i$ and $j$, there is at most one root on $\lambda\in [0,\infty)$ for the equation: $x_i^r(\lambda)=x_j^r(\lambda)$. Therefore, we can apply Megiddo's parametric search~\cite{ref:MegiddoAp83} to do the sorting. The total time is $O((\tau+n)\log^2n)$, where $\tau$ is the running time of the decision algorithm. By Theorem~\ref{theo:10}, $\tau=O(m+n\log n)$. Hence, the total time for computing $\sigma$ is $O(m\log^2 n+n\log^3 n)$.

In addition, Megiddo's parametric search~\cite{ref:MegiddoAp83} will return an interval $(\lambda^*_1,\lambda^*_2]$ such that it contains $\lambda^*$ and $\sigma$ is also the order of the values $x_i^r(\lambda)$ for any $\lambda\in (\lambda^*_1,\lambda^*_2]$.
\qed
\end{proof}

Note that $\lambda^*$ is the smallest feasible value. As $\lambda^*\in (\lambda^*_1,\lambda^*_2]$, our subsequent feasible tests will be only on values $\lambda\in (\lambda^*_1,\lambda^*_2)$ because if $\lambda\leq \lambda^*_1$, then $\lambda$ is not feasible and if $\lambda\geq \lambda^*_2$, then $\lambda$ is feasible.
Lemmas \ref{lem:decalgo2} and \ref{lem:pre} together lead to the following result.

\begin{lemma}\label{lem:faster}
Each feasibility test can be done in $O(m+n\log\log n)$ time for any $\lambda\in (\lambda^*_1,\lambda^*_2)$.
\end{lemma}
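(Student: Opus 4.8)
The plan is to combine the two preceding lemmas directly, since they were engineered precisely to make this statement fall out. Lemma~\ref{lem:decalgo2} gives an $O(m+n\log\log n)$-time feasibility test under the single assumption that the values $x_i^r$ are already sorted, while Lemma~\ref{lem:pre} produces, as part of its preprocessing, the permutation $\sigma$ that is the sorted order of $x_i^r(\lambda^*)$, together with an interval $(\lambda^*_1,\lambda^*_2]$ on which $\sigma$ remains the correct sorted order. So the first step is to observe that for any query value $\lambda\in(\lambda^*_1,\lambda^*_2)$, the sorted order of the values $x_i^r(\lambda)$ is exactly $\sigma$, and $\sigma$ has already been computed once and for all during preprocessing.

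Next I would point out that $\sigma$ gives us the sorted order for free, with no per-query sorting cost. Concretely, to perform a feasibility test at such a $\lambda$, we evaluate the $n$ functions $x_i^r(\lambda)=x_i+\sqrt{\lambda^2-y_i^2}$ in $O(n)$ total time and read them off in the order prescribed by $\sigma$; by the guarantee of Lemma~\ref{lem:pre}, this listing is genuinely sorted, so no comparison-based sort is needed. This is precisely the hypothesis required to invoke Lemma~\ref{lem:decalgo2}, namely that the values $x_i^r$ for $i=1,\ldots,n$ are already sorted. Applying that lemma, the feasibility test then runs in $O(m+n\log\log n)$ time.

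The only subtlety worth flagging — and the step I would treat most carefully — is the justification that restricting attention to the open interval $(\lambda^*_1,\lambda^*_2)$ loses nothing. This is exactly the remark preceding the statement: since $\lambda^*$ is the smallest feasible value and $\lambda^*\in(\lambda^*_1,\lambda^*_2]$, any $\lambda\le\lambda^*_1$ is automatically infeasible and any $\lambda\ge\lambda^*_2$ is automatically feasible, so feasibility at the endpoints and outside the interval is already known and need not be tested by the fast routine. Thus confining the fast feasibility test to $\lambda\in(\lambda^*_1,\lambda^*_2)$ is exactly what lets us reuse the single sorted order $\sigma$ across all subsequent queries, and the proof is complete. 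I do not expect a genuine obstacle here; the work has been front-loaded into Lemmas~\ref{lem:decalgo2} and~\ref{lem:pre}, and this lemma is essentially the act of stitching them together with the observation that $\sigma$ is computed only once.
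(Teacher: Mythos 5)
Your proposal is correct and matches the paper exactly: the paper gives no separate proof of Lemma~\ref{lem:faster} beyond stating that Lemmas~\ref{lem:decalgo2} and~\ref{lem:pre} together yield it, which is precisely your argument of evaluating the $x_i^r(\lambda)$ values in the precomputed order $\sigma$ (valid throughout $(\lambda^*_1,\lambda^*_2]$) to satisfy the sortedness hypothesis of Lemma~\ref{lem:decalgo2}. Your flagged subtlety about restricting to the open interval is also the same justification the paper gives in the remark preceding the lemma.
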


To compute $\lambda^*$, we ``parameterize'' our decision algorithm with $\lambda$ as a parameter. Although we do not know $\lambda^*$, we execute the decision algorithm in such a way that it computes the same subset of sensors $s_{g(1)},s_{g(2)},\ldots$ as would be obtained if we ran the decision algorithm on $\lambda=\lambda^*$.

Recall that for any $\lambda$, step $i$ of our decision algorithm
computes the sensor $s_{g(i)}$, the set
$S_i=\{s_{g(1)},s_{g(2)},\ldots,s_{g(i)}\}$, and the value $R_i$, and obtains the configuration $C_i$.
In the following,
we often consider $\lambda$ as a variable rather than a fixed value.
Thus, we will use $S_i(\lambda)$ (resp., $R_i(\lambda)$, $s_{g(i)}(\lambda)$,
$C_i(\lambda)$, $x_i^r(\lambda)$) to refer to the corresponding $S_i$
(resp., $R_i$, $s_{g(i)}$, $C_i$, $x_i^r$).
Our algorithm has at most $n$ steps. Consider a general $i$-th
step for $i\geq 1$. Right before the step, we have an interval
$(\lambda^1_{i-1},\lambda_{i-1}^2]$ and a sensor set $S_{i-1}(\lambda)$,
such that the following algorithm invariants hold.

\begin{enumerate}
\item
$\lambda^*\in (\lambda^1_{i-1},\lambda_{i-1}^2]$.

\item
The set $S_{i-1}(\lambda)$ is the same (with the same order) for all values
$\lambda\in (\lambda^1_{i-1},\lambda_{i-1}^2)$.
\item
$R_{i-1}(\lambda)$ on
$\lambda\in (\lambda^1_{i-1},\lambda_{i-1}^2)$ is either constant or equal to $x_j+ \sqrt{\lambda^2-y_j^2}+c$ for some constant $c$ and some sensor $s_j$, and the function $R_{i-1}(\lambda)$ is explicitly maintained by the algorithm.

\item
$R_{i-1}(\lambda)<\beta$ for all $\lambda\in (\lambda^1_{i-1},\lambda_{i-1}^2)$.
\end{enumerate}

Initially when $i=1$, we let $\lambda_0^1=\lambda^*_1$ and
$\lambda_0^2=\lambda^*_2$. Since $S_{0}(\lambda)=\emptyset$ and $R_0(\lambda)=0$
for any $\lambda$, by Lemma~\ref{lem:pre}, all invariants hold for $i=1$.
In general, the $i$-th step will either compute $\lambda^*$, or obtain an interval
$(\lambda^1_{i},\lambda_{i}^2]\subseteq (\lambda^1_{i-1},\lambda_{i-1}^2]$ and a
sensor $s_{g(i)}(\lambda)$ with $S_i(\lambda)=S_{i-1}(\lambda)\cup
\{s_{g(i)}(\lambda)\}$. The running time of the step is
$O((m+ n\log\log n)(\log n+\log m))$. The details are given below.

\subsection{The Algorithm}

We assume $\lambda^*\neq \lambda_{i-1}^2$ and thus $\lambda^*$
is in $(\lambda_{i-1}^1,\lambda_{i-1}^2)$. Our following
algorithm can proceed without this assumption and we make the assumption
only for explaining the rationale of our approach.
Since $\lambda^*\in (\lambda_{i-1}^1,\lambda_{i-1}^2)$, according to our
algorithm invariants, for all $\lambda\in (\lambda_{i-1}^1,\lambda_{i-1}^2)$, $S_{i-1}(\lambda)$ is the same as $S_{i-1}(\lambda^*)$.
We simulate the decision algorithm on $\lambda=\lambda^*$. To determine the sensor
$s_{g(i)}(\lambda^*)$, we first compute the set $S_{i1}(\lambda^*)$, as follows.

Consider any sensor $s_k$ in $S\setminus S_{i-1}(\lambda)$. Its position
in $C_{i-1}(\lambda)$ is $x_k^r(\lambda)=x_k+ \sqrt{\lambda^2-y_k^2}$,
which is an increasing function of $\lambda$.
Thus, both the left and the right extensions of $s_k$ in $C_{i-1}(\lambda)$
are increasing functions of $\lambda$.
Suppose $f(\lambda)$ is either the left or the right extension of $s_k$ in
$C_{i-1}(\lambda)$.
According to our algorithm invariants, $R_{i-1}(\lambda)$ on
$\lambda\in (\lambda^1_{i-1},\lambda_{i-1}^2)$ is either constant or equal to $x_j+ \sqrt{\lambda^2-y_j^2}+c$ for some constant $c$ and some sensor $s_j$. We claim that there is at most one value $\lambda$ in $(\lambda^1_{i-1},\lambda_{i-1}^2)$ such that
$R_{i-1}(\lambda)=f(\lambda)$. Indeed, if $R_{i-1}(\lambda)$ is constant, then this is obviously true; otherwise, this is also true because each of $f(\lambda)$ and $R_{i-1}(\lambda)$ on $\lambda\in [0,\infty)$ defines a half branch of a hyperbola (and thus they have at most one intersection in $(\lambda^1_{i-1},\lambda_{i-1}^2)$).

Let $S'=S\setminus S_{i-1}(\lambda)$.
If we increase $\lambda$ from $\lambda^1_{i-1}$ to
$\lambda^2_{i-1}$, an ``event'' happens if $R_{i-1}(\lambda)$ is equal to the
left or right extension value of a sensor $s_k\in S'$ at some value of
$\lambda$ (called an {\em event value}). Note that
$S_{i1}(\lambda)$ does not change between any two adjacent events.
To compute $S_{i1}(\lambda^*)$, we first compute all event values, and this can be done in $O(n)$ time by using the function $R_{i-1}(\lambda)$ and all left and right extension functions of the sensors in $S'$.
Let $\Lambda$ denote the set of all event values, and we also add $\lambda^1_{i-1}$ and $\lambda^2_{i-1}$ to $\Lambda$. We then
sort all values in $\Lambda$. Using the feasibility test in Lemma~\ref{lem:faster}, we do binary search  to find two adjacent values $\lambda_1$ and $\lambda_2$ in the sorted
list of $\Lambda$ such that $\lambda^*\in (\lambda_1,\lambda_2]$. Note that
$(\lambda_1,\lambda_2]\subseteq (\lambda^1_{i-1},\lambda_{i-1}^2]$.
Since $|\Lambda|=O(n)$, the binary search uses $O(\log n)$ feasibility tests, which takes overall $O(m\log n+ n\log n\log\log n)$ time.

We make another assumption that $\lambda^*\neq \lambda_2$. Again, this assumption is
only for the explanation and the following algorithm can proceed without this assumption. Under the assumption, for any $\lambda\in (\lambda_1,\lambda_2)$, the set $S_{i1}(\lambda)$ is exactly
$S_{i1}(\lambda^*)$. Hence, we can compute $S_{i1}(\lambda^*)$ by
taking any $\lambda\in (\lambda_1,\lambda_2)$ and explicitly
computing $S_{i1}(\lambda)$ in $O(n)$ time.

The above has computed  $S_{i1}(\lambda^*)$.
If $S_{i1}(\lambda^*)\neq \emptyset$, we take an arbitrary sensor of
$S_{i1}(\lambda^*)$ as $s_{g(i)}(\lambda^*)$. Further, we
let $\lambda_i^1=\lambda_1$, $\lambda_i^2=\lambda_2$, and
$S_i(\lambda)=S_{i-1}(\lambda)\cup \{s_{g(i)}(\lambda^*)\}$.

If $S_{i1}(\lambda^*)=\emptyset$, then we
need to compute the set $S_{i2}(\lambda^*)$.
Since $\lambda^*\in (\lambda_1,\lambda_2) \subseteq
(\lambda^1_{i-1},\lambda_{i-1}^2)$, according to our algorithm invariants,
$R_{i-1}(\lambda)$ is a nondecreasing function on $\lambda\in
(\lambda_1,\lambda_2)$. For each sensor
$s_k\in S$, $x_k- \sqrt{\lambda^2-y_k^2}-r$ is a decreasing function on $\lambda\in (\lambda_1,\lambda_2)$.
Therefore, the interval
$(\lambda_1,\lambda_2)$ contains at most one value $\lambda$ such that
$R_{i-1}(\lambda)=x_k- \sqrt{\lambda^2-y_k^2}-r$.
If we increase $\lambda$ from $\lambda_1$ to $\lambda_2$, an ``event'' happens when $R_{i-1}(\lambda)$
is equal to $x_k- \sqrt{\lambda^2-y_k^2}-r$ for some sensor $s_k\in S'$ at some {\em event
value} $\lambda$. Also, the set $S_{i2}(\lambda)$
is fixed between any two adjacent
events. Hence, we use the following way to compute $S_{i2}(\lambda^*)$.

We first compute the set $\Lambda$ of all event values, and
also add $\lambda_1$ and
$\lambda_2$ to $\Lambda$. After sorting all values of $\Lambda$, by using our decision algorithm, we do binary search  to find two adjacent values
$\lambda_1'$ and $\lambda_2'$ in the sorted list of $\Lambda$ with $\lambda^*\in (\lambda_1',\lambda_2']$. Note that $(\lambda_1',\lambda_2']\subseteq (\lambda_1,\lambda_2]$.
Since $|\Lambda|=O(n)$, the binary search calls the decision algorithm
$O(\log n)$ times, which takes $O(m\log n +n\log n\log\log n)$ time in total.
Since $S_{i2}(\lambda)$ is the same for all $\lambda \in (\lambda_1',\lambda_2')$, we take an
arbitrary value $\lambda\in (\lambda'_1,\lambda'_2)$ and
compute $S_{i2}(\lambda)$ explicitly in $O(n)$ time.

\begin{lemma}\label{lem:20}
If $S_{i2}(\lambda)= \emptyset$, then $\lambda^*$ is in
$\{\lambda_{i-1}^2, \lambda_2, \lambda_2'\}$.
\end{lemma}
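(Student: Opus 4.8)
The plan is to argue by contradiction. Suppose $S_{i2}(\lambda)=\emptyset$ for the representative value $\lambda\in(\lambda_1',\lambda_2')$ at which the algorithm evaluated $S_{i2}$, and suppose toward a contradiction that $\lambda^*\notin\{\lambda_{i-1}^2,\lambda_2,\lambda_2'\}$. First I would chase the three nested interval memberships that the algorithm has already established. The step invariant gives $\lambda^*\in(\lambda_{i-1}^1,\lambda_{i-1}^2]$, the first binary search gives $\lambda^*\in(\lambda_1,\lambda_2]\subseteq(\lambda_{i-1}^1,\lambda_{i-1}^2]$, and the second binary search gives $\lambda^*\in(\lambda_1',\lambda_2']\subseteq(\lambda_1,\lambda_2]$. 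Excluding the three right endpoints then pushes $\lambda^*$ into the innermost \emph{open} interval: from $\lambda^*\neq\lambda_2'$ and $\lambda^*\in(\lambda_1',\lambda_2']$ we get $\lambda^*\in(\lambda_1',\lambda_2')$, and likewise the exclusions of $\lambda_2$ and $\lambda_{i-1}^2$ place $\lambda^*$ in $(\lambda_1,\lambda_2)$ and in $(\lambda_{i-1}^1,\lambda_{i-1}^2)$.

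Next I would exploit that $(\lambda_1,\lambda_2)$ and $(\lambda_1',\lambda_2')$ are, by construction, gaps between two consecutive event values, so $S_{i1}(\lambda)$ is constant on $(\lambda_1,\lambda_2)$ and $S_{i2}(\lambda)$ is constant on $(\lambda_1',\lambda_2')$. Since $\lambda^*$ lies in the same event-free interval as the representative value used by the algorithm, these two constancies transfer the emptiness facts to $\lambda^*$: being in the branch where the computed $S_{i1}$ is empty yields $S_{i1}(\lambda^*)=\emptyset$, and the standing hypothesis $S_{i2}(\lambda)=\emptyset$ yields $S_{i2}(\lambda^*)=\emptyset$. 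Because $\lambda^*\in(\lambda_{i-1}^1,\lambda_{i-1}^2)$, the step invariants apply at $\lambda^*$, so $S_{i-1}(\lambda^*)$ and the maintained function $R_{i-1}(\lambda^*)$ agree with the quantities the algorithm tracks, and in particular invariant~(4) gives $R_{i-1}(\lambda^*)<\beta$.

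Finally I would derive the contradiction from the correctness of the decision algorithm in Section~\ref{sec:correct}. Consider running the decision algorithm on the fixed input value $\lambda^*$: at step $i$ it encounters $S_{i1}(\lambda^*)=\emptyset$ and $S_{i2}(\lambda^*)=\emptyset$ while $R_{i-1}(\lambda^*)<\beta$, so by the termination rule in Section~\ref{sec:description} it reports that $\lambda^*$ is infeasible. This contradicts the fact that $\lambda^*$, being the smallest feasible value, is itself feasible. Hence the assumption fails, and $S_{i2}(\lambda)=\emptyset$ forces $\lambda^*\in\{\lambda_{i-1}^2,\lambda_2,\lambda_2'\}$.

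The main obstacle, and the point deserving the most care, is the bookkeeping of the three boundary exclusions. Each of $\lambda_{i-1}^2$, $\lambda_2$, and $\lambda_2'$ is exactly the right endpoint at one of the three successive refinement levels, and it is precisely at such a boundary that one of the sets $S_{i-1}$, $S_{i1}$, or $S_{i2}$ may jump, which would break the transfer of a property from the representative interior value to $\lambda^*$. I would therefore verify that ruling out all three endpoints is what legitimately places $\lambda^*$ strictly inside \emph{every} event-free interval invoked above, so that each constancy argument applies verbatim at $\lambda^*$; once this is secured, the conclusion follows directly from the decision algorithm's infeasibility rule together with the feasibility of $\lambda^*$.
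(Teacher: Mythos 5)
Your proposal is correct and follows essentially the same route as the paper's own proof: assume $\lambda^*\notin\{\lambda_{i-1}^2,\lambda_2,\lambda_2'\}$, use the endpoint exclusions to place $\lambda^*$ strictly inside $(\lambda_1',\lambda_2')$, transfer $S_{i1}(\lambda^*)=S_{i2}(\lambda^*)=\emptyset$ by constancy on the event-free intervals, and contradict invariant~(4), namely $R_{i-1}(\lambda^*)<\beta$, via the correctness of the decision algorithm at $\lambda=\lambda^*$. The only cosmetic difference is that the paper phrases the final contradiction as ``the decision algorithm on $\lambda^*$ would have covered all barriers using only $S_{i-1}(\lambda^*)$, contradicting $R_{i-1}(\lambda^*)<\beta$,'' whereas you phrase it as the algorithm reporting $\lambda^*$ infeasible, contradicting the feasibility of $\lambda^*$ --- the same argument in equivalent form.
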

\begin{proof}
Assume to the contrary that $\lambda^*\not\in \{\lambda_{i-1}^2, \lambda_2, \lambda_2'\}$. Then, our
previous two assumptions on $\lambda^*$ are true and $\lambda^*\in (\lambda_1',\lambda_2')$. According to our algorithm, $S_{i2}(\lambda^*)=S_{i2}(\lambda)=\emptyset$ (and $S_{i1}(\lambda^*)=S_{i1}(\lambda)=\emptyset$).
This means that if we applied the decision algorithm on $\lambda=\lambda^*$, the sensor $s_{g(i)}(\lambda^*)$ would not exist. In other words, the decision algorithm would stop after the first $i-1$ steps, i.e., the decision algorithm would only use sensors in $S_{i-1}(\lambda^*)$ to cover all barriers.

On the other hand, according to our algorithm invariants, $R_{i-1}(\lambda)<\beta$ for all $\lambda\in (\lambda_{i-1}^1,\lambda_{i-1}^2)$. Since  $\lambda^*\in (\lambda_1',\lambda_2')\subseteq (\lambda_{i-1}^1,\lambda_{i-1}^2)$, $R_{i-1}(\lambda^*)< \beta$, but this contradicts with that all barriers are covered by the sensors of $S_{i-1}(\lambda^*)$ after the first $i-1$ steps of the decision algorithm.
\qed
\end{proof}

By Lemma \ref{lem:20}, if $S_{i2}(\lambda)=\emptyset$, then $\lambda^*$ is the smallest
feasible value of $\{\lambda_{i-1}^2, \lambda_2, \lambda_2'\}$, which can be
found by performing three feasibility tests.  Otherwise, we proceed as follows.

We make the third assumption that $\lambda^*\neq \lambda_2'$.
Thus, $\lambda^*\in (\lambda_1',\lambda_2')$ and $S_{i2}(\lambda^*)=S_{i2}(\lambda)$ for any $\lambda \in (\lambda_1',\lambda_2')$.
Next, we compute $s_{g(i)}(\lambda^*)$, i.e., the leftmost sensor of $S_{i2}(\lambda^*)$. Although $S_{i2}(\lambda)$ is the same for all $\lambda\in (\lambda_1',\lambda_2')$, the leftmost sensor of $S_{i2}(\lambda)$ may not be the same for all $\lambda\in
(\lambda_1',\lambda_2')$.
For each sensor $s_k\in S_{i2}(\lambda)$ and any $\lambda\in
(\lambda_1',\lambda_2')$, the location of $s_k$ in the configuration
$C_{i-1}(\lambda)$ is $x_k^r(\lambda)$. As discussed before, $x_k^r(\lambda)$ for $\lambda\in (\lambda_1',\lambda_2')$ defines a piece of the upper branch of a hyperbola in the 2D
coordinate system in which the $x$-coordinates correspond to the
$\lambda$ values and the $y$-coordinates correspond to $x_k^r(\lambda)$ values.
 We consider the lower envelope $\calL$ of the functions $x_k^r(\lambda)$ defined by all sensors $s_k$ of $S_{i2}(\lambda)$. For each point $q$ of $\calL$, suppose $q$ lies on the function defined by a sensor $s_k$ and $q$'s $x$-coordinate is $\lambda_q$. If
$\lambda=\lambda_q$, then the leftmost sensor of $S_{i2}(\lambda)$ is
$s_k$. This means that each curve segment of $\calL$ defined by one sensor corresponds to the same leftmost sensor of $S_{i2}(\lambda)$. Based on this observation,
we compute $s_{g(i)}(\lambda^*)$ as follows.

Since the functions $x_k^r(\lambda)$ and $x_{j}^r(\lambda)$ of two sensors $s_k$ and $s_j$ have at most one intersection in $(\lambda_1',\lambda_2')$, the number of vertices of the lower envelope $\calL$ is $O(n)$ and $\calL$ can be computed in $O(n\log n)$ time  \cite{ref:AtallahSo85,ref:HershbergerFi89,ref:SharirDa96}.  Let $\Lambda$ be the set of the
$x$-coordinates of the vertices of $\calL$.
We also add $\lambda_1'$ and $\lambda_2'$ to $\Lambda$. After sorting all
values of $\Lambda$, by using our decision algorithm, we do binary search
on the sorted list of $\Lambda$ to find two adjacent values $\lambda_1''$ and $\lambda_2''$ such that $\lambda^*\in (\lambda_1'',\lambda_2'']$.
Note that $(\lambda_1'',\lambda_2'']\subseteq (\lambda_1',\lambda_2']$.
Since $\lambda_1''$ and $\lambda_2''$ are two adjacent values of the
sorted $\Lambda$,
by our above analysis, there is a sensor that is always the
leftmost sensor of $S_{i2}(\lambda)$ for all $\lambda\in
(\lambda_1'',\lambda_2'']$. To find the sensor, we can take any value $\lambda$ in $(\lambda_1'',\lambda_2'')$ and
explicitly compute the locations of sensors in $S_{i2}(\lambda)$.
The above computes $s_{g(i)}(\lambda^*)$ in $O(m\log n+ n\log n\log\log
n)$ time.

Finally, we let $\lambda_i^1=\lambda_1''$, $\lambda_i^2=\lambda_2''$, and
$S_i(\lambda)=S_{i-1}(\lambda)\cup \{s_{g(i)}(\lambda^*)\}$.

If the above computes $\lambda^*$, then we terminate the algorithm. Otherwise, we  obtain an interval $(\lambda_i^1,\lambda_i^2]\subseteq (\lambda_{i-1}^1,\lambda_{i-1}^2]$ that
contains $\lambda^*$ and the set $S_i(\lambda)$. If $s_{g(i)}(\lambda)\in S_{i1}(\lambda)$, then $R_i(\lambda)$ is equal to $x_{g(i)}+\sqrt{\lambda^2-y_{g(i)}^2}+r$.
If $s_{g(i)}(\lambda)\in S_{i2}(\lambda)$, then
$R_i(\lambda)=R_{i-1}(\lambda)+2r$. According to the third algorithm invariant, $R_{i-1}(\lambda)$ is either constant or equal to $x_j+ \sqrt{\lambda^2-y_j^2}+c$ for some constant $c$ and some sensor $s_j$. Therefore, regardless of $s_{g(i)}(\lambda)\in S_{i1}(\lambda)$ or $s_{g(i)}(\lambda)\in S_{i2}(\lambda)$, $R_{i}(\lambda)$ is either constant or equal to $x_{j'}+ \sqrt{\lambda^2-y_{j'}^2}+c'$ for some constant $c'$ and some sensor $s_{j'}$, which establishes the third algorithm invariant for $R_{i}(\lambda)$.

If it is not true that $R_i(\lambda)<\beta$ for all $\lambda\in (\lambda_i^1,\lambda_i^2)$, then we preform some additional processing as follows.
We first have the following lemma.

\begin{lemma}\label{lem:60}
If it is not true that
$R_i(\lambda)<\beta$ for all $\lambda\in (\lambda_i^1,\lambda_i^2)$,
then $R_i(\lambda)$ is strictly increasing on
$(\lambda_i^1,\lambda_i^2)$ and there is a single value $\lambda'\in
(\lambda_i^1,\lambda_i^2)$ such that $R_i(\lambda')=\beta$.
\end{lemma}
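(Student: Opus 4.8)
The plan is to exploit the dichotomy just established for $R_i(\lambda)$: by the third algorithm invariant, $R_i(\lambda)$ is either constant or of the form $x_{j'}+\sqrt{\lambda^2-y_{j'}^2}+c'$, and in the latter case it is continuous and \emph{strictly} increasing on $(\lambda_i^1,\lambda_i^2)$, since $\frac{d}{d\lambda}\sqrt{\lambda^2-y_{j'}^2}=\lambda/\sqrt{\lambda^2-y_{j'}^2}>0$. Thus the first conclusion reduces to ruling out the constant case under the hypothesis, after which the second conclusion follows from the intermediate value theorem together with strict monotonicity (which delivers uniqueness for free).

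The engine for both parts is the observation that, for every $\lambda\in(\lambda_i^1,\lambda_i^2)$, the parametric algorithm reproduces exactly the run of the decision algorithm on $\lambda$ through step $i$: this is precisely what the algorithm invariants together with the interval refinement via the event sets $\Lambda$ guarantee (the chosen sensor $s_{g(i)}(\lambda)$, and hence $R_i(\lambda)$, is the same across the final subinterval and agrees with what the decision algorithm picks on each individual $\lambda$). Consequently, if $R_i(\lambda)\geq\beta$ for such a $\lambda$, then the decision algorithm on $\lambda$ would already report $\lambda\geq\lambda^*$ at step $i$, so $\lambda$ is feasible; equivalently, every infeasible $\lambda$ in the interval satisfies $R_i(\lambda)<\beta$.

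To eliminate the constant case I would argue by contradiction. Suppose $R_i$ is constant; then the hypothesis that $R_i(\lambda)<\beta$ fails somewhere forces this constant to be $\geq\beta$. Since $\lambda^*\in(\lambda_i^1,\lambda_i^2]$ and $\lambda^*$ is the \emph{smallest} feasible value, the interval $(\lambda_i^1,\lambda^*)$ is nonempty (as $\lambda^*>\lambda_i^1$) and contains only infeasible values; picking any such $\lambda$ gives $R_i(\lambda)\geq\beta$ because $R_i$ is constant, which by the correspondence above makes $\lambda$ feasible, a contradiction. Hence under the hypothesis $R_i$ cannot be constant and is therefore strictly increasing, establishing the first conclusion.

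For the second conclusion the same correspondence yields $R_i(\lambda)<\beta$ for all $\lambda\in(\lambda_i^1,\lambda^*)$ (these are infeasible), while the hypothesis supplies some $\lambda_0\in(\lambda_i^1,\lambda_i^2)$ with $R_i(\lambda_0)\geq\beta$. As $R_i$ is continuous and strictly increasing on $(\lambda_i^1,\lambda_i^2)$, the intermediate value theorem produces a value $\lambda'\in(\lambda_i^1,\lambda_i^2)$ with $R_i(\lambda')=\beta$, and strict monotonicity makes it unique. I expect the constant-case elimination to be the main obstacle: it is the only step that goes beyond calculus, requiring both the simulation correspondence and the minimality of $\lambda^*$, and one must be careful that $(\lambda_i^1,\lambda^*)$ is genuinely nonempty so that an infeasible witness is available.
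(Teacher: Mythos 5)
Your proof is correct and is in substance the paper's own argument: both hinge on picking a value $\lambda\in(\lambda_i^1,\lambda^*)$ (nonempty because $\lambda^*\in(\lambda_i^1,\lambda_i^2]$) and observing that $R_i(\lambda)\geq\beta$ would make $\lambda$ a feasible value strictly less than $\lambda^*$, contradicting minimality, with the structural dichotomy of invariant~3 (constant versus a strictly increasing hyperbola branch) supplying monotonicity and uniqueness. The only difference is organizational --- the paper first establishes existence of $\lambda'$ with $R_i(\lambda')=\beta$ by excluding ``$R_i(\lambda)>\beta$ everywhere'' and then rules out the constant case, whereas you rule out the constant case first and obtain $\lambda'$ via the intermediate value theorem, making explicit the continuity and simulation-correspondence steps the paper leaves implicit.
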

\begin{proof}
The proof is almost the same as that of Lemma 4 in \cite{ref:LeeMi17} and we include it here for the sake of completeness.

Since it is not true that
$R_i(\lambda)<\beta$ for all $\lambda\in (\lambda_i^1,\lambda_i^2)$, either
$R_i(\lambda)>\beta$ for all $\lambda\in (\lambda_i^1,\lambda_i^2)$, or there is  a value $\lambda'\in (\lambda_i^1,\lambda_i^2)$ with
$R_i(\lambda')=\beta$. We first argue that the former case cannot happen.

Assume to the contrary that $R_i(\lambda)>\beta$ for all
$\lambda\in (\lambda_i^1,\lambda_i^2)$. Then, $R_i(\lambda')>\beta$ for any $\lambda'\in (\lambda_i^1,\lambda^*)$ since $\lambda^*\in
(\lambda_i^1,\lambda_i^2]$. But this would
imply that we have found a feasible solution using only sensors in
$S_i(\lambda)$ and the maximum movement of all sensors in
$S_i(\lambda)$ is at most $\lambda'<\lambda^*$,
contradicting with that $\lambda^*$ is the maximum moving distance in an
optimal solution.

Hence, there is a value $\lambda'\in (\lambda_i^1,\lambda_i^2)$ with
$R_i(\lambda')=\beta$. Next, we show that $R_i(\lambda)$ must be a strictly
increasing function. Assume to the contrary this is not true. Then, $R_i(\lambda)$ must be constant on $(\lambda_i^1,\lambda_i^2)$. Thus, $R_i(\lambda)=\beta$ for
all $\lambda\in (\lambda_i^1,\lambda_i^2)$. Since $\lambda^*\in
(\lambda_i^1,\lambda_i^2]$, let $\lambda'$ be any value in
$(\lambda_i^1,\lambda^*)$. Hence, $R_i(\lambda')=\beta$, and as above, $\lambda'$ is a feasible value. However, $\lambda'<\lambda^*$ incurs
contradiction.
\qed
\end{proof}

By Lemma~\ref{lem:60}, we compute the value $\lambda'\in
(\lambda_i^1,\lambda_i^2)$ such that $R_i(\lambda')=\beta$.
This means that all barriers are covered by the sensors of $S_i(\lambda')$ in  $C_i(\lambda')$, and thus $\lambda'$ is a feasible value and $\lambda^*\in (\lambda_i^1,\lambda']$. Because $R_i(\lambda)$ is strictly
increasing, $R_i(\lambda)<\beta$ for all $\lambda\in
(\lambda_i^1,\lambda')$. We update $\lambda_i^2$ to $\lambda'$.

In either case, $R_i(\lambda)<\beta$ now holds for all $\lambda\in (\lambda_i^1,\lambda_i^2)$. Finally, we perform the jump-over procedure, as follows (this step is not needed in~\cite{ref:LeeMi17}).

If $R_i(\lambda)$ is a constant and $R_i(\lambda)$ is not in the interior of a barrier, then we set $R_i(\lambda)$ to the left endpoint of the next barrier. If $R_i(\lambda)$ is an increasing function, then we do the following.
If we increase $\lambda$ from $\lambda_i^1$ to $\lambda_i^2$, an {\em event} happens if $R_i(\lambda)$ is equal to the left or right endpoint of a barrier at some {\em event value} of $\lambda$. During the increasing of $\lambda$, between any two adjacent events, $R_i(\lambda)$ is either always in the interior of a barrier or is always between two barriers. We compute all event values in $O(m)$ time by using the function $R_{i}(\lambda)$ and the endpoints of all barriers.
Let $\Lambda$ denote the set of all event values, and we also add $\lambda^1_{i}$ and $\lambda^2_{i}$ to $\Lambda$. After sorting all values in $\Lambda$, using the decision algorithm in Lemma~\ref{lem:faster}, we do binary search on the sorted list of $\Lambda$ to
find two adjacent values $\lambda_1$ and $\lambda_2$  such that $\lambda^*\in (\lambda_1,\lambda_2]$. Note that
$(\lambda_1,\lambda_2]\subseteq (\lambda^1_{i},\lambda_{i}^2]$.
Since $|\Lambda|=O(m)$, the binary search calls the decision algorithm $O(\log m)$ times, which takes overall $O(m\log m+ n\log\log n\log m)$ time.
Finally, we reset $\lambda_i^1=\lambda_1$ and $\lambda_i^2=\lambda_2$.

This completes the $i$-th step of the algorithm, which runs in $O((m+n\log\log
n)\cdot (\log m+\log n))$ time. If $\lambda^*$ is not computed in this step, then it can be verified that all algorithm variants are maintained (the analysis is similar to that in~\cite{ref:LeeMi17}, so we omit it). The algorithm will compute $\lambda^*$ after at most $n$ steps as there are $n$ sensors. The total time of the algorithm is $O(n\cdot (m+n\log\log n)\cdot (\log m+\log n))$, which is bounded by $O(nm\log m + n^2\log n\log\log n)$ as shown in the following theorem. Note that the space of the algorithm is $O(n)$.

\begin{theorem}
The problem MBC can be solved in $O(nm\log m + n^2\log n\log\log n)$ time and $O(n)$ space.
\end{theorem}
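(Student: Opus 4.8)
The plan is to package the parameterized algorithm of this section into the stated bounds, in three parts: correctness, time, and space. For correctness, I would note that by construction each of the at most $n$ steps faithfully reproduces the choice the decision algorithm of Theorem~\ref{theo:10} would make when run on $\lambda=\lambda^*$, while maintaining the four step-invariants stated above, in particular $\lambda^*\in(\lambda_i^1,\lambda_i^2]$ throughout. Whenever a step cannot extend the solution (the case $S_{i2}(\lambda)=\emptyset$ handled by Lemma~\ref{lem:20}, or termination after all $n$ sensors have been placed), $\lambda^*$ is pinned down to a constant-size candidate set and recovered by a constant number of feasibility tests. Since one sensor is added per step and $|S|=n$, the process halts within $n$ steps and outputs $\lambda^*$; an optimal placement then follows from a single run of the decision algorithm on $\lambda^*$.

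Next I would add up the costs. The preprocessing of Lemma~\ref{lem:pre} is performed once and contributes $O(n\log^3 n+m\log^2 n)$. Each step is dominated by its $O(\log n)$ (resp.\ $O(\log m)$) binary-search feasibility tests, each costing $O(m+n\log\log n)$ by Lemma~\ref{lem:faster}, together with the $O(n\log n)$ construction of the lower envelope $\calL$; this is $O((m+n\log\log n)(\log m+\log n))$ per step, hence $O(n\,(m+n\log\log n)(\log m+\log n))$ over all steps.

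The only genuine computation is reducing this sum to $O(nm\log m+n^2\log n\log\log n)$, and I expect this case analysis (which mirrors the one already done in the proof of Theorem~\ref{theo:line}) to be the main, albeit routine, obstacle. Expanding the product yields the four terms $nm\log m$, $nm\log n$, $n^2\log m\log\log n$, and $n^2\log n\log\log n$, so it suffices to absorb $nm\log n$ and $n^2\log m\log\log n$, together with the two preprocessing terms, into $nm\log m+n^2\log n\log\log n$. For $nm\log n$: if $m\le n$ then $nm\log n\le n^2\log n\le n^2\log n\log\log n$, and if $m>n$ then $\log n<\log m$ gives $nm\log n<nm\log m$. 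For $n^2\log m\log\log n$: if $m\ge n\log\log n$ then $n^2\log\log n\le nm$ yields $n^2\log m\log\log n\le nm\log m$, and otherwise $\log m=O(\log n)$ yields $n^2\log m\log\log n=O(n^2\log n\log\log n)$. The preprocessing terms are handled using $\log^2 n=O(n)$: $n\log^3 n=O(n^2\log n)$ and $m\log^2 n=O(nm)=O(nm\log m)$. Combining these bounds gives the claimed running time.

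Finally I would argue the $O(n)$ space bound. Across steps the only persistent structures are the order $\sigma$, the current subset $S_i(\lambda)$, and the explicit description of the function $R_i(\lambda)$, all of size $O(n)$; within a step the lower envelope $\calL$ has $O(n)$ complexity and the feasibility test of Lemma~\ref{lem:decalgo2} uses $O(n)$ working space. The one subtle point is the jump-over binary search over the $\Theta(m)$ barrier-crossing event values: rather than materialize that list, I would treat the sorted barrier array as read-only input and binary search directly on the barrier endpoints, computing each event value on demand in $O(1)$ time from the monotone function $R_i(\lambda)$. Thus no step ever stores an $\Omega(m)$ array, the working space remains $O(n)$, and the proof is complete.
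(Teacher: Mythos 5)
Your proof is correct and follows essentially the same route as the paper: the paper's own proof of this theorem consists almost entirely of the reduction of $O(n\cdot(m+n\log\log n)\cdot(\log m+\log n))$ to $O(nm\log m+n^2\log n\log\log n)$ by a two-case analysis (split at $m\le n\log\log n$), and your case analysis, though split slightly differently ($m\le n$ for the $nm\log n$ term, $m\ge n\log\log n$ for the $n^2\log m\log\log n$ term), is equivalent and equally valid. You go beyond the paper in two places where it is terse: first, you explicitly absorb the preprocessing cost $O(n\log^3 n+m\log^2 n)$ of Lemma~\ref{lem:pre} into the final bound, which the paper silently omits from the theorem's proof; second, and more substantively, the paper merely asserts ``the space of the algorithm is $O(n)$'' with no argument, whereas you supply one -- and your observation about the jump-over step is a genuine repair, since the algorithm as described in Section~\ref{sec:optimization} materializes and sorts a set $\Lambda$ of $\Theta(m)$ event values, which would violate the $O(n)$ working-space claim when $m\gg n$. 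Your fix is sound: because $R_i(\lambda)$ is monotone on the interval and the barrier endpoints are given sorted, the event values are implicitly sorted in endpoint order, so one can binary search directly over the read-only barrier array, evaluating each event value on demand in $O(1)$ time, without storing any $\Omega(m)$ structure. In short, your argument matches the paper's where the paper gives one, and is more careful where it does not.
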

\begin{proof}
As discussed before, the running time of the algorithm is $O(n\cdot (m+n\log\log n)\cdot (\log m+\log n))$, which is $O(nm\log m+n^2\log n\log\log n+ nm\log n+n^2 \log m\log \log n)$. We claim that $nm\log n+n^2 \log m\log \log n=O(nm\log m+n^2\log n\log\log n)$.
Indeed, if $m\leq n\log\log n$, then $nm\log n=O(n^2 \log n\log \log n)$ and $n^2 \log m\log \log n=O(n^2 \log n\log \log n)$; otherwise, $nm\log n=O(nm\log m)$ and $n^2 \log m\log \log n=O(nm\log m)$.
\qed
\end{proof}

\section{Concluding Remarks}
\label{sec:conclusion}

As mentioned before, the high-level scheme of our algorithm for MBC is similar to those in~\cite{ref:LeeMi17}. However, a new technique we propose in this paper can help reduce the space complexity of the algorithm in~\cite{ref:LeeMi17}. Specifically, Lee et al.~\cite{ref:LeeMi17} solved the line-constrained problem in $O(n^2\log n\log\log n)$ time and $O(n^2)$ space for the case where $m=1$, sensors have the same range, and sensors have weights. If we apply the similar preprocessing as in Lemma~\ref{lem:pre}, then the space complexity of the algorithm~\cite{ref:LeeMi17} can be reduced to $O(n)$ while the time complexity does not change asymptotically.

%

\bibliographystyle{plain}
\bibliography{reference-TCS}

\end{document}